\newtheorem{theorem}{Theorem}
\newtheorem{lemma}{Lemma}
\newtheorem{proposition}{Proposition}
\renewcommand {\emptyset}{\varnothing}
\newcommand {\eps}{\varepsilon}
\let\oldmarginpar\marginpar
\renewcommand\marginpar[1]{\-\oldmarginpar[\raggedright\footnotesize #1]%
{\raggedleft\footnotesize #1}}
\newcommand{\rhoR}{R}
\newcommand{\ol}{\overline}
\newcommand{\emp}{\emptyset}
\newcommand{\Sig}{\Sigma}
\newcommand{\bi}{\begin{itemize}}
\newcommand{\ei}{\end{itemize}}
\newcommand{\be}{\begin{enumerate}}
\newcommand{\ee}{\end{enumerate}}
\newcommand{\bd}{\begin{description}}
\newcommand{\ed}{\end{description}}
\newcommand{\bq}{\begin{quote}}
\newcommand{\eq}{\end{quote}}
\newcommand{\txt}[1]{\mbox{ #1 }}
\newcommand{\cC}{{\mathcal C}}
\newcommand{\cD}{{\mathcal D}}
\newcommand{\cL}{{\mathcal L}}
\newcommand{\cN}{{\mathcal N}}
\begin{document}

\title{Quotient Complexity of  Bifix-, Factor-, and Subword-Free Regular Languages
\thanks{This work was supported by the Natural Sciences and Engineering Research Council of Canada under grant no.\ OGP0000871 and by 
the Slovak Research and Development Agency under contract
APVV-0035-10 ``Algorithms, Automata, and Discrete Data Structures''.
}}

 \author{Janusz Brzozowski$^1$
   \quad Galina Jir\'askov\'a$^2$
   \quad Baiyu Li$^1$
   \quad Joshua Smith$^1$}

\date{}

\maketitle

\begin{center}
\vskip-20pt
 {\normalsize $^1$ David R. Cheriton School of Computer Science, University of Waterloo,}\\
 {\normalsize Waterloo, ON, Canada N2L 3G1 }\\
 {\normalsize {\tt \{brzozo@,b5li@,j45smith@student.math.\}uwaterloo.ca }}

 {\normalsize$^2$  Mathematical Institute, Slovak Academy of Sciences,}\\
 {\normalsize Gre\v{s}\'akova 6, 040 01 Ko\v{s}ice, Slovakia}\\
 {\normalsize {\tt jiraskov@saske.sk}}

\end{center}

\begin{abstract}
\label{***abst}
A language $L$ is prefix-free if,
whenever  words $u$ and $v$ are in $L$ and $u$ is a prefix of $v$, then $u=v$.
Suffix-, 
factor-, and subword-free languages 
are defined similarly, where ``subword"  means ``subsequence". 
A language is bifix-free if it is both prefix- and suffix-free.
We study the quotient complexity, more commonly known as state complexity, 
of  operations in the classes of  bifix-, factor-, and subword-free regular languages. 
We find tight upper bounds on the quotient 
complexity
of  intersection, union, difference, symmetric difference, 
concatenation, star,  and reversal 
in these three classes of languages.

\end{abstract}

\pagestyle{plain}
\pagenumbering{arabic}

\section{Introduction}
\label{***intro}
The state complexity of a regular language $L$ 
is the number of states  in the minimal deterministic finite automaton (dfa) 
accepting $L$~\cite{Yu01}.
This complexity is the same as the quotient complexity~\cite{Brz10} of $L$, 
which is the number of distinct left quotients of $L$. 
We prefer quotient complexity since it is more closely related to properties of languages.
The quotient complexity of an operation in a class $\cC$ of regular languages 
is the worst-case quotient complexity of the language resulting  from the operation, 
taken as a function of the quotient complexities of the operands in class  $\cC$. 
For surveys on state and quotient complexity see~\cite{Brz10,Yu01}.

One of the first results concerning the state complexity of an operation 
is the 1966 theorem by Mirkin~\cite{Mir66}, 
who showed that the bound  $2^n$ for the reversal of an $n$-state dfa can be attained.
In 1970 Maslov~\cite{Mas70} stated without proof the bounds 
on the complexities of union, concatenation,  star, 
and several other operations in the class of regular languages, 
and gave languages meeting these bounds. 
In 1994 these operations, along with intersection, reversal, and left and right quotients, 
were studied in detail by Yu, Zhuang and Salomaa~\cite{YZS94}.

Results exist also for proper subclasses of the class of regular languages: unary~\cite{PiSh02,YZS94}, 
finite~\cite{CCSY01,HaSa08,Yu01}, cofinite~\cite{BGN10}, 
prefix-free~\cite{HSW09,JiKr10}, suffix-free~\cite{Cmo11,HaSa09,JiOl09}, 
ideal~\cite{BJL10}, and closed~\cite{BJZ10}.
The bounds can vary considerably.

Free languages  (with the exception of $\{\eps\}$, where $\eps$ is the empty word) 
are  codes, which constitute an important class 
of languages and have applications in such areas 
as cryptography, data compression, and information transmission. 
They have been studied extensively; see, for example,~\cite{BPR10,JuKo97}.
In particular, \emph{prefix} and \emph{suffix codes}~\cite{BPR10} 
are prefix-free and suffix-free languages, respectively, 
\emph{infix codes}~\cite{Shy01,ShTh74} are factor-free, 
and \emph{hypercodes}~\cite{Shy01,ShTh74} are subword-free, 
where by subword we mean subsequence.
Moreover, free languages are special cases of convex languages~\cite{AnBr09,Thi73}.
We are interested only in regular free languages.

The state complexities of intersection, union, concatenation,  star, and reversal  
were first studied by Han, K.~Salomaa, and Wood~\cite{HSW09} for prefix-free languages, 
and  by Han and K.~Salomaa~\cite{HaSa09} for suffix-free languages.
In the present paper, these results are extended to bifix-, factor- and subword-free languages. 
In particular, we obtain tight upper bounds on the complexities 
of intersection, union, difference, symmetric difference, star, concatenation, and reversal 
in these three classes of free languages. 

\section{Preliminaries}
\label{***prelim}

It is assumed that the reader is familiar with finite automata 
and regular languages as treated in~\cite{Per90,Yu97}, for example.
If $\Sig$ is a finite non-empty alphabet, 
then $\Sig^*$ is the set of all  words over this alphabet,
with $\eps$ as the empty word. 
For $w\in\Sig^*$,  let $|w|$ be the length of $w$.
A~language is any subset of $\Sig^*$.

The following set operations are defined on languages:  
\emph{complement} ($\ol{L}=\Sig^*\setminus L$),  
\emph{union}  ($K\cup L$),  
\emph{intersection} ($K\cap L$),  
\emph{difference} ($K\setminus L$), and 
\emph{symmetric difference} ($K\oplus L$). 
A general \emph{boolean operation} with two arguments is denoted by $K\circ L$. 

We also  define the 
\emph{product,} usually called  \emph{concatenation} or \emph{catenation,}
($KL=\{w\in \Sig^*\mid w=uv, u\in K, v\in L\}$),
(Kleene) \emph{star} ($L^*=\bigcup_{i\ge 0}L^i$ with $L^0=\{\eps\}$), and 
\emph{positive closure} ($L^+=\bigcup_{i\ge 1}L^i$).

The reverse $w^R$ of a word $w\in\Sig^*$ is defined inductively as follows: 
$\eps^R=\eps$, and $(wa)^R=aw^R$ for every symbol $a$ in $\Sigma$
and every word $w$ in $\Sigma^*$.
The \emph{reverse} of a language $L$ is denoted by $L^\rhoR$ and 
is defined as $L^R=\{w^R\mid w\in L\}$.

\emph{Regular languages} over $\Sig$ 
are languages that can be obtained from the \emph{set of basic languages} 
$\{\emp,\{\eps\}\} \cup \{\{a\}\mid  a\in \Sig\}$,  
using a finite number of operations of union, product,  and  star. 
We use regular expressions to represent languages.
If $E$ is a regular expression, then $\cL(E)$ is the language denoted by that expression. 
For example, the regular expression  $E=(\eps\cup a)^* b$ denotes language $L=\cL(E)=(\{\eps\}\cup\{a\})^*\{b\}$. 
We usually do not distinguish notationally between regular languages and regular expressions.

Whenever convenient, we derive upper bounds on the state complexity 
of operations on free languages following the approach of~\cite{Brz10}.
A \emph{quotient} of a language $L$ by a word $w$ 
is defined as $L_w=\{x\in \Sig^* \mid wx \in L \}$. 
The number of distinct quotients of a language  
is called its \emph{quotient complexity}  and is denoted by $\kappa(L)$. 

Quotients of regular languages~\cite{Brz64,Brz10} 
can be computed as follows:
First, the \emph{$\eps$-function}  $L^\eps$ of a regular language $L$ is
 $L^\eps=\emptyset$ if $\eps\not\in L$, and
 $L^\eps=\eps$ if $\eps\in L$.
The \emph{quotient by a letter} $a$ in $\Sigma$ 
is computed by  induction:
 $b_a=\emptyset$ if $b\in \{\emptyset,\eps\}$ or $b\in\Sigma$ and $b\not= a$,
 and
 $b_a=\eps$ if $b=a$;
 $(\overline{L})_a =\overline{L_a};\,
 (K\circ L)_a = K_a\circ L_a; \,
 (KL)_a = K_aL \cup K^\eps L_a;\,
 (L^*)_a = L_aL^*$.
The quotient by a word $w$ in $\Sigma^*$  is computed by
induction on the length of $w$:
 $L_\eps =  L$
 and $L_{wa} = (L_w)_a$.
A~quotient $L_w$ is \emph{accepting} if $\eps\in L_w$; 
otherwise it is \emph{rejecting}.

A \emph{deterministic finite automaton} (dfa) 
is a quintuple $\cD=(Q,\Sigma,\delta,q_0,F)$, where 
$Q$ is a finite set of \emph{states,} 
$\Sigma$ is a finite \emph{alphabet,} 
$\delta:Q\times\Sigma\rightarrow Q$ is the \emph{transition function,} 
$q_0$ is the \emph{initial state}, and 
$F\subseteq Q$ is the set of \emph{final} or \emph{accepting states\/}. 
As usual, the transition function is extended to  $Q\times\Sigma^*$. 
The dfa $\cD$ accepts a word $w$ in $\Sigma^*$ if ${\delta}(q_0,w)\in F$. 
The set of all words accepted by $\cD$ is $L(\cD)$. 
By the \emph{language of a state} $q$ of $\cD$ 
we mean the language $L_q$ accepted 
by the automaton $(Q,\Sigma,\delta,q,F)$. 
A state is \emph{empty} if its language is empty.

The \emph{quotient automaton} of a regular language $L$ is the
dfa $\cD=(Q, \Sig, \delta, q_0,F)$, where 
$Q=\{L_w\mid w\in\Sig^*\}$, 
$\delta(L_w,a)=L_{wa}$, 
$q_0=L_\eps$,  
$F=\{L_w\mid \eps\in L_w\}$.
This  is the minimal dfa accepting $L$. 
Hence the quotient complexity of $L$ is equal to the state complexity of $L$, 
and we call it simply  \emph{complexity}.

\section{Free Languages}
\label{***freeprops}

If $u,v,w,x\in \Sig^*$ and $w=uxv$, then 
$u$ is a \emph{prefix} of $w$, 
$x$ is a \emph{factor} of $w$, and 
$v$ is a \emph{suffix} of $w$. 
Both $u$ and $v$ are also factors of $w$. 
If $w=u_0v_1u_1\cdots v_nu_n$, where $u_i,v_i \in \Sig^*$, then 
$v=v_1v_2\cdots v_n$ is a \emph{subword} of $w$. 
Every factor of $w$ is also a subword of $w$.

A language $L$ is \emph{prefix-free} 
(respectively, \emph{suffix-, factor-,} or \emph{subword-free}) 
if, whenever  words $u$ and $v$ are in $L$ and $u$ is a prefix 
(respectively, suffix, factor, or subword) of $v$, then $u=v$.  
Additionally, $L$ is \emph{bifix-free} if it is both prefix and suffix-free. 
All subword-free languages are factor-free, 
and all factor-free languages are bifix-free. 
For convenience, we refer to prefix-, suffix-, bifix-, factor-, and subword-free languages together as 
\emph{free} languages.

If $\eps$ is a quotient of $L$,
then $L$ also has the empty quotient,
since $\eps_a=\emp$, for all $a$ in $\Sig$.
We say that a quotient $L_w$ is \emph{uniquely reachable} 
if $L_w=L_x$ implies that $w=x$.
We now restate two propositions from~\cite{HaSa09,HSW09}
in our terminology. 

\begin{proposition}\label{prop:pf}
A non-empty  language is prefix-free if and only if it
 has exactly one accepting quotient and that quotient is $\eps$.
\end{proposition}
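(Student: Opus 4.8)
The plan is to translate both sides of the equivalence into statements about quotients and then read the result off directly. The bridge I would set up first is the observation that a quotient $L_w$ is \emph{accepting} exactly when $w\in L$: indeed $L_w$ is accepting iff $\eps\in L_w$, and $\eps\in L_w$ iff $w\cdot\eps=w\in L$. With this in hand, the condition ``$L_w=\eps$'' (meaning $L_w=\{\eps\}$) unpacks as ``$w\in L$, and no proper extension $wx$ with $x\neq\eps$ lies in $L$'', that is, $w$ is a word of $L$ that is maximal under the prefix order inside $L$. Prefix-freeness says precisely that every word of $L$ is such a maximal word, so the two formulations should match with no real work.

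For the forward direction I would take an arbitrary $w\in L$ and show $L_w=\{\eps\}$. Certainly $\eps\in L_w$. If $x\in L_w$, then $wx\in L$; since $w\in L$ as well and $w$ is a prefix of $wx$, prefix-freeness forces $wx=w$, hence $x=\eps$. Thus $L_w=\{\eps\}$, so \emph{every} accepting quotient equals $\{\eps\}$. Because $L$ is non-empty there is at least one word in $L$, hence at least one accepting quotient; since all accepting quotients coincide with $\{\eps\}$, there is exactly one accepting quotient and it is $\eps$.

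For the converse I would assume that $L$ has $\{\eps\}$ as its only accepting quotient and verify prefix-freeness directly. Suppose $u,v\in L$ with $u$ a prefix of $v$, say $v=ux$. Since $u\in L$, the quotient $L_u$ is accepting, so by hypothesis $L_u=\{\eps\}$. But $v=ux\in L$ means $x\in L_u=\{\eps\}$, so $x=\eps$ and therefore $u=v$, which is exactly prefix-freeness. I do not expect a genuine obstacle here, as the whole argument is a dictionary translation between the prefix order on $L$ and the accepting quotients. The only two points needing care are the use of non-emptiness---it is what guarantees that at least one accepting quotient exists, so that ``exactly one'' is not vacuous---and the convention that the ``$\eps$'' in the statement denotes the quotient $\{\eps\}$ rather than the empty word itself.
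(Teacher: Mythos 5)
Your proof is correct: the key bridge (a quotient $L_w$ is accepting iff $w\in L$) and the two directions are exactly the standard argument for this fact. The paper itself states this proposition without proof, restating it from Han, Salomaa, and Wood, so there is nothing to compare against beyond noting that your direct quotient-based translation is the expected one.
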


\begin{proposition}\label{prop:sf}
 The quotient by $\eps$  of a non-empty suffix-free language
 is uniquely reachable, and the language has the empty quotient.
\end{proposition}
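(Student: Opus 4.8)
The plan is to establish the two assertions separately, in both cases exploiting the following reformulation of suffix-freeness: if $u\in L$ and $x\neq\eps$, then $xu\notin L$, since otherwise $u$ would be a proper suffix of $xu$ with both words in $L$. Throughout I fix a word $w\in L$, which exists because $L$ is non-empty.

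For the unique reachability of $L_\eps=L$, I would suppose that $L_x=L$ for some word $x$ and aim to conclude $x=\eps$. Since $w\in L=L_x$, the definition of quotient gives $xw\in L$. But then $w$ and $xw$ both lie in $L$ and $w$ is a suffix of $xw$, so suffix-freeness forces $w=xw$, i.e. $x=\eps$. This step is short and presents no real difficulty.

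For the existence of the empty quotient, I would argue by contradiction, assuming that every quotient of $L$ is non-empty; equivalently, that every word of $\Sigma^*$ is a prefix of some word of $L$. Fixing a letter $a\in\Sigma$ and setting $n=\kappa(L)$, the $n+1$ quotients $L_{\eps},L_a,L_{a^2},\dots,L_{a^n}$ cannot all be distinct, so $L_{a^i}=L_{a^j}$ for some $0\le i<j\le n$. Choosing any $v$ in this common non-empty quotient, I obtain $a^iv\in L$ and $a^jv\in L$; since $a^jv=a^{j-i}(a^iv)$ with $j-i\ge 1$, the word $a^iv$ is a proper suffix of $a^jv$. As both words lie in $L$, this contradicts suffix-freeness, so some quotient must be empty.

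The routine part is the unique-reachability claim; the crux is the empty-quotient argument, whose one genuinely non-trivial ingredient is manufacturing two distinct words of $L$ that stand in the suffix relation. The device that achieves this is to read the unary words $a^k$ and use finiteness of the quotient set to force a repetition $L_{a^i}=L_{a^j}$; the prepended block $a^{j-i}$ is exactly what produces the forbidden suffix pair. I expect the only point needing care is the explicit appeal to regularity, namely that $\kappa(L)$ is finite, which is what makes the pigeonhole step available.
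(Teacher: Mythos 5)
Your proof is correct; the paper states Proposition~\ref{prop:sf} without proof (it merely restates it from the cited works on suffix-free regular languages), and your two steps---applying suffix-freeness to $w$ and $xw$ to get unique reachability of $L_\eps$, and a pigeonhole on the quotients $L_{a^i}$ to manufacture a forbidden suffix pair and force an empty quotient---constitute exactly the standard argument. Your closing remark that finiteness of $\kappa(L)$ is essential is well taken: the empty-quotient claim genuinely fails for non-regular suffix-free languages (e.g.\ $\{u\,b\,a^{|u|+1}\mid u\in\{a,b\}^*\}$ is suffix-free, yet every word is a prefix of one of its words), so the paper's implicit restriction to regular languages is doing real work in that half of the statement.
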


Let $L$ be any language. 
If $(L_u)_x =L_v$ for some words $u,v$ and a non-empty word $x$,
then $L_v$ is \emph{positively reachable} from $L_u$, 
and we denote this by $L_u \to~L_v$.
The relation $\to$ is transitive.
The next proposition uses this relation to characterize finite languages. 
 
\begin{proposition}\label{prop:finite}
 If $L$ is any  language
 with the set of quotients 
 $\{L_1,L_2,\ldots,L_n\}$, and $u,v\in\Sig^*$,
 then the following are equivalent:\\
 1. $L$ is finite.\\
 2. $L_u\to L_v$ and $L_v\to L_u$ if and only if $L_u=L_v=\emp$.\\
 3. There exists a total order $\preceq$ on the set of quotients:

       $L=L_1 \preceq L_2\preceq \cdots \preceq L_{n-1} \preceq L_n=\emp$\\
       which satisfies the condition that $(L_i)_a=L_j$
       implies \mbox{$L_i\prec L_j$ or $L_i=L_j=L_n$.} 
\end{proposition}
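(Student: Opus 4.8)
The plan is to prove the three statements equivalent by the cycle $(1)\Rightarrow(2)\Rightarrow(3)\Rightarrow(1)$. The guiding intuition is that $L$ is finite exactly when its quotient automaton is acyclic apart from the self-loop on the empty quotient $\emp$: condition $(2)$ records that $\to$ has no cycle through a non-empty quotient, and condition $(3)$ repackages the absence of such cycles as a topological ordering of the quotients. I would dispose of the degenerate case $L=\emp$ (where $n=1$ and $L_1=\emp=L_n$) at once and assume $L\neq\emp$ throughout the rest.

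For $(1)\Rightarrow(2)$, the backward implication of the biconditional is immediate, since $\emp_a=\emp$ for every $a\in\Sig$ already gives $\emp\to\emp$. For the forward implication I would argue contrapositively: if $L_u\to L_v$ and $L_v\to L_u$ with, say, $L_u\neq\emp$, then transitivity of $\to$ yields $L_u\to L_u$, that is, $(L_u)_x=L_u$ for some non-empty $x$. Choosing any $w\in L_u$, the identity $(L_u)_x=L_u$ forces $x^k w\in L_u$ for all $k\ge 0$, so $L_u$ is infinite. But every quotient of a finite language is finite, because the map $y\mapsto uy$ injects $L_u$ into $L$; hence $L$ cannot be finite, a contradiction. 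Thus $L_u=L_v=\emp$.

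For $(2)\Rightarrow(3)$, which I expect to be the main obstacle, I would first use $(2)$ to observe that $\to$, restricted to the non-empty quotients, is a strict partial order: it is transitive by hypothesis, while $(2)$ rules out both self-cycles $L_u\to L_u$ and antisymmetry failures $L_u\to L_v\to L_u$ among non-empty quotients. Next I would check that $L$ is minimal and $\emp$ maximal for $\to$. Indeed, any quotient has the form $L_w$, so a relation $L_w\to L$ reads $L_{wx}=L$ with $wx$ non-empty, giving $L\to L$ and hence $L=\emp$ by $(2)$; dually, $\emp\to L_v$ forces $L_v=\emp$. A topological sort of the strict partial order, pinned so that $L$ comes first and $\emp$ last (both placements are consistent since nothing $\to$-precedes $L$ and $\emp$ $\to$-precedes nothing distinct from itself), produces the required total order $\preceq$. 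Finally I would verify the transition condition: if $(L_i)_a=L_j$ then $L_i\to L_j$, so when $L_i\neq\emp$ we get $L_i\neq L_j$ (otherwise $(2)$ would force $L_i=\emp$) and therefore $L_i\prec L_j$ because $\preceq$ extends $\to$; and when $L_i=\emp=L_n$ we have $L_j=(L_i)_a=\emp=L_n$.

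For $(3)\Rightarrow(1)$, take any $w=a_1\cdots a_m\in L$ and follow the quotients $L=L_{w_0},L_{w_1},\dots,L_{w_m}$ along its prefixes, where each step $L_{w_i}\to L_{w_{i+1}}$ is a single-letter transition. By $(3)$ the order-index either strictly increases or the quotient equals $L_n=\emp$; but $\emp$ is rejecting and, once reached, stays $\emp$, whereas $L_{w_m}$ must contain $\eps$ since $w\in L$. Hence no $L_{w_i}$ equals $\emp$, the indices strictly increase, and we obtain $m+1\le n$, i.e.\ $|w|\le n-1$. Every word of $L$ therefore has length at most $n-1$, so $L$ is finite. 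The only delicate point in the whole argument is the order-extension step inside $(2)\Rightarrow(3)$, where one must extend the partial order $\to$ to a total order while simultaneously forcing the endpoints $L$ and $\emp$ into the first and last positions; the remaining implications reduce to the cycle-versus-bounded-length dichotomy for quotients.
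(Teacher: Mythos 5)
Your proof is correct, and its core ideas coincide with the paper's: both rest on the observation that $L$ is infinite exactly when some non-empty quotient lies on a $\to$-cycle, and both obtain the total order in~(3) by extending the (partial) order induced by $\to$. The differences are organizational rather than substantive. You close the cycle $(1)\Rightarrow(2)\Rightarrow(3)\Rightarrow(1)$, whereas the paper proves $(1)\Leftrightarrow(2)$ first and then $(1)\Leftrightarrow(3)$; in particular you derive~(3) from~(2) directly, while the paper derives it from~(1) via the reflexive closure of $\to$ (which is antisymmetric precisely because of~(2), so the content is the same). For $(3)\Rightarrow(1)$ you give a direct argument bounding every word of $L$ by length $n-1$ via the strictly increasing index along prefixes, while the paper reuses its pumping-style contrapositive ($L$ infinite yields $L_u=L_{ux}$ with $x\in\Sig^+$, contradicting the order condition); your version is slightly more informative and equally valid. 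You also spell out details the paper leaves implicit, namely that $L$ can be placed first and $\emp$ last in the topological sort (since nothing $\to$-precedes $L$ and $\emp$ precedes only itself) and that the single-letter transition condition in~(3) follows from $\preceq$ extending $\to$. No gaps.
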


\begin{proof}
 Suppose $L$ is a finite language.
 If $L_u\to L_v$ and $L_v\to L_u$,
 then $(L_u)_x=L_v$ and $(L_v)_y=L_u$,
 for some words $x$ and $y$.
 If also $L_u\neq \emp$,
 then $u(xy)^kw\in L$ for every nonnegative $k$ and any word $w$ in $L_u$,
 which contradicts that $L$ is finite. 
 Note also that  $L_u\neq \emp$ if $L_v\neq \emp$.
 If $L_u=L_v=\emp$, then $(L_u)_a=L_u$ for every $a$ in $\Sig$,
 and we have $L_u\to L_u$.
 Thus (1) implies (2).

 Now suppose  that $L$ is infinite and $\kappa(L)=n$.
 Then there is a word $uxv$ \mbox{in $L$}  of length at least $n$
 such that $L_u=L_{ux}$
 and $x\in\Sig^+$.
 Thus $L_u\to L_u$ and $L_u\neq \emp$,
 showing that (2) cannot hold.
 Hence (2) implies (1).

 If (1) holds, we can take the reflexive  closure $\to'$ of the relation $\to$.
 Then the relation $\to'$ is a partial order,
 and we can use any total order $\preceq$
 consistent with relation $\to'$ to get~(3). 
 Conversely, if (3) holds, then $L$ cannot be infinite,
 by the same argument as was used to prove that (2) implies (1).
\end{proof}

Since every subword-free language is finite,
we get the next lemma, 
which we use later to prove that upper bounds on the quotient complexity 
of some operations on subword-free languages 
cannot be reached if the alphabet of the language 
does not have sufficiently many letters.

\begin{lemma}\label{lem:letter}
 Let $L$ be a subword-free language with $\kappa(L)=n$, where $n\ge4$.
 Let the distinct quotients 
 $L=L_\eps=L_1,L_2,\dots,L_{n-2},L_{n-1}=\eps, L_n=\emptyset$ of $L$ 
 be ordered as in Proposition \ref{prop:finite}. 
 If $L_w=L_2$ for some word $w$, then $|w|=1$.
\end{lemma}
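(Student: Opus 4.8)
The plan is to argue by contradiction, showing that no word of length at least two can reach the second-smallest quotient $L_2$. First I would record the structural facts needed. Since $\kappa(L)=n\ge 4$, the language $L$ is non-empty; and since $L$ is subword-free, it is factor-free, hence bifix-free, hence suffix-free. By Proposition~\ref{prop:sf}, the quotient $L_\eps=L_1$ of the non-empty suffix-free language $L$ is then uniquely reachable, so $L_x=L_\eps$ forces $x=\eps$. I would also note that the quotients $L_1,\dots,L_n$ are distinct, so the order supplied by Proposition~\ref{prop:finite} is strict, $L_1\prec L_2\prec\cdots\prec L_n=\emptyset$, and it satisfies the transition property that $(L_i)_a=L_j$ implies $L_i\prec L_j$ or $L_i=L_j=L_n$.

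Next I would bound $|w|$ from below: because $L_w=L_2$ and $L_2\neq L_1=L_\eps$, we cannot have $w=\eps$, so $|w|\ge 1$. It then remains only to rule out $|w|\ge 2$.

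So suppose $|w|\ge 2$ and write $w=w'a$ with $a\in\Sig$ and $w'\neq\eps$. Then $(L_{w'})_a=L_w=L_2$, and the transition property gives $L_{w'}\prec L_2$ or $L_{w'}=L_2=L_n=\emptyset$. Since $2<n$ and the quotients are distinct, $L_2\neq\emptyset$, so the second alternative is impossible and $L_{w'}\prec L_2$. But in the strict chain $L_1\prec L_2\prec\cdots$ the only quotient below $L_2$ is $L_1=L_\eps$; hence $L_{w'}=L_\eps$. As $w'\neq\eps$, this contradicts the unique reachability of $L_\eps$, and therefore $|w|=1$.

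The argument is short, and the one point needing care is the observation in the last paragraph: $L_2$ has exactly one predecessor in the total order, namely the start quotient $L_\eps$, so any direct transition into $L_2$ must originate at $L_\eps$. Combining this with the unique reachability of $L_\eps$ (which comes from suffix-freeness) is precisely what forces the reaching word to be a single letter. The hypothesis $n\ge 4$ enters only to guarantee $L_2\neq L_n=\emptyset$, so that a transition into $L_2$ is governed by the strict part of the order rather than by the exceptional self-loop at the empty quotient.
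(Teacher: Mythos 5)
Your proof is correct, and it reorganizes the argument in a way that differs from the paper's at the key step. The paper also decomposes $w=ua$ with $u$ non-empty and also uses unique reachability of $L_\eps$ (to rule out $L_u=L$), but it then excludes the case $L_u=L_2$ by a direct appeal to subword-freeness: picking $v\in L_2$, it observes that $uv\in L$ and $uav\in L$ would violate subword-freeness; only after that does it invoke the order of Proposition~\ref{prop:finite} to handle $L_u=L_i$ with $i>2$. You instead apply the transition condition of Proposition~\ref{prop:finite} directly to $(L_{w'})_a=L_2$: since $L_2\neq L_n=\emptyset$, the only possibility is $L_{w'}\prec L_2$, hence $L_{w'}=L_1=L_\eps$, and unique reachability finishes the job. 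This is slightly cleaner, and it also reveals that the lemma needs only two consequences of subword-freeness --- finiteness (for the total order) and suffix-freeness (for unique reachability of $L_\eps$) --- so your argument in fact proves the statement for any finite suffix-free language, whereas the paper's write-up invokes the full subword-free hypothesis in one case where, as your proof shows, the order condition already suffices. All the small points are handled properly: $L\neq\emptyset$ since $n\ge4$, the chain is strict because the quotients are distinct, and $n\ge4$ guarantees $L_2\neq L_n$ so the exceptional self-loop clause of the order condition cannot apply.
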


\begin{proof}
 Since $n\ge4$, the quotients $L$ and  $L_2$ are not empty.
 Let $v$ be a word in~$L_2$.
 If $L_w=L_2$, then $w$ cannot be $\eps$ because $L_2\neq L_1$.
 If $|w|>1$, 
 then $w=ua$ for a letter $a $ and a non-empty word $u$.
 Then $L_u\neq L$ since $L$ is uniquely reachable.
 If~$L_u=L_2$, then $uv\in L$ and $uav\in L$,  and language $L$ is not subword-free.
 Thus, if $L_u=L_i$, for some~$i$, 
 then $i> 2$, $L_{w}=(L_u)_a=(L_i)_a=L_j$ where $j\ge i > 2$, 
 contradicting that $L_{w}= L_2$. 
 Thus $w$ must be a one-letter word.
\end{proof}

Finally, we describe a simple method of constructing free languages.

\begin{proposition}\label{prop:construct}
 Let $L \subseteq \Sig^*$  be any language,
 and let $a \notin \Sig$.  Then
 (1)  $aL$ is suffix-free,
 (2) $La$ is prefix-free,
 (3) $aLa$ is factor-free.
\end{proposition}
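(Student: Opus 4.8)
The plan is to exploit the fact that the freshly added letter $a$, since $a\notin\Sig$ and $L\subseteq\Sig^*$, never occurs inside any word of $L$. Consequently the marker $a$ sits in a completely rigid set of positions in every word of the three languages: each word of $aL$ has a single $a$, namely its first letter; each word of $La$ has a single $a$, namely its last letter; and each word of $aLa$ has exactly two occurrences of $a$, at its first and last positions. In every case the argument reduces to locating this marker and observing that the containment hypothesis (suffix, prefix, or factor) forces the marker of the shorter word to align with a marker of the longer word, which pins the two words down to be equal.

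For part (1), I would take $w_1,w_2\in aL$, say $w_1=au_1$ and $w_2=au_2$ with $u_1,u_2\in\Sig^*$, and assume $w_1$ is a suffix of $w_2$, i.e.\ $w_2=tw_1$ for some word $t$. Because $u_2$ contains no $a$, the unique occurrence of $a$ in $w_2$ is its first letter; but the leading $a$ of $w_1$ occupies position $|t|+1$ of $w_2$, so $|t|+1=1$, forcing $t=\eps$ and hence $w_1=w_2$. Part (2) is the mirror image: writing $w_1=u_1a$ and $w_2=u_2a$ with $u_1,u_2\in\Sig^*$ and assuming $w_1$ is a prefix of $w_2$, say $w_2=w_1t$, the unique $a$ of $w_2$ is its last letter, while the trailing $a$ of $w_1$ occupies position $|w_1|$; thus $|w_1|=|w_2|$ and $t=\eps$.

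Part (3) is the one that needs a little more care, since \emph{factor}-freeness is strictly stronger than being simultaneously prefix- and suffix-free, so it does not follow formally from (1) and (2); I would instead track both markers at once. Writing $w_1=au_1a$ and $w_2=au_2a$ with $u_1,u_2\in\Sig^*$, and assuming $w_1$ is a factor of $w_2$, say $w_2=sw_1t$, I would use that $w_2$ has its only two $a$'s at positions $1$ and $|w_2|$. The leading $a$ of $w_1$ lies at position $|s|+1$, which must therefore equal $1$ or $|w_2|$; since $|w_1|\ge 2$ the second option would leave no room for $w_1$ inside $w_2$, so $s=\eps$ and $w_1$ is a prefix of $w_2$. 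The trailing $a$ of $w_1$ then lies at position $|w_1|$, which must be $|w_2|$, giving $t=\eps$ and $w_1=w_2$.

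The three arguments are short and essentially identical, so I do not expect a genuine obstacle; the only point to get right is the bookkeeping of where the marker $a$ can occur. The one place to stay alert is part (3): the temptation is to deduce factor-freeness from the weaker bifix-free property furnished by (1) and (2), which is not valid, so the two endmarkers must be handled together as above.
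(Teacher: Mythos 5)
Your proposal is correct and uses the same idea as the paper: since $a\notin\Sig$, the occurrences of $a$ act as rigid endmarkers, so a proper suffix/prefix/factor of a word in $aL$/$La$/$aLa$ cannot contain the marker(s) in the required positions (the paper phrases this as ``every proper suffix of a word in $aL$ lies in $\Sig^*$'' and ``every proper factor of a word in $aLa$ contains at most one $a$''). Your explicit position bookkeeping, and your caution that (3) does not follow from (1) and (2), are both sound; no issues.
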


\begin{proof}

 (1) Every proper suffix of a word in $aL$
 is a word over the alphabet $\Sigma$,
 and so is not in $aL$.
 Therefore $aL$ is suffix-free.\\
 (2) The proof is dual to that of (1).\\
 (3) Every proper factor of a word in $aLa$
 contains at most one $a$
  and therefore is not in $aLa$.
\end{proof}

\section{Boolean Operations}
\label{***bool}

The complexity of boolean operations,
in the class of prefix- and suffix-free regular languages,
except for the difference and symmetric difference of suffix-free languages,
was studied in \cite{HaSa09,HSW09,JiKr10,JiOl09}.
It was shown that for prefix-free languages,
the tight bounds for union, intersection, difference, and symmetric difference
are $mn-2$, $mn-2(m+n-3)$, $mn-(m+2n-4)$, and $mn-2$, respectively.
For  union and  intersection of suffix-free languages, the tight bounds are
$mn-(m+n-2)$ and 
$mn-2(m+n-3)$, respectively.
 The bounds for difference and symmetric difference 
are  $mn-(m+2n-4)$ and
 $mn-(m+n-2)$, respectively, and the bounds for all four boolean operations
 are met by binary suffix-free languages~\cite{Cmo11}.
 The next two theorems provide results
for boolean operations on bifix-, \mbox{factor-,} and subword-free languages.

\begin{theorem}[\bf Boolean Operations: Bifix- and Factor-Free Languages]\label{thm:bifix,factor}
 Let $K$ and $L$  be bifix- or factor-free languages
 over an alphabet~$\Sigma$
 with $\kappa(K)=m$ and $\kappa(L)=n$, where $m,n\ge4$. Then\\
   \hglue10pt  1. $\kappa(K\cap L)\le mn-3(m+n-4)$;\\ 
   \hglue10pt  2. $\kappa(K\setminus L)\le mn-(2m+3n-9)$;\\ 
   \hglue10pt  3. $\kappa(K\cup L), \kappa(K\oplus L)\le mn-(m+n)$.\\ 
 All the bounds are tight if $|\Sigma|\ge3$.
\end{theorem}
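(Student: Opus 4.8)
The plan is to establish all three upper bounds through the standard product (cross-product) automaton and to argue tightness separately by exhibiting witnesses. Throughout, I would write the quotients of $K$ as comprising the uniquely reachable initial quotient $K=K_\eps$ (Proposition \ref{prop:sf}), the unique accepting quotient $\eps$ (Proposition \ref{prop:pf}), the empty quotient $\emp$, and $m-3$ remaining quotients I call \emph{internal} (so $m\ge4$ guarantees these three are distinct and at least one internal quotient exists); similarly $L$ has $n-3$ internal quotients. Since $(K\circ L)_w=K_w\circ L_w$ for every boolean operation $\circ$, each quotient of $K\circ L$ is one of the $mn$ pairs $(K_i,L_j)$, so it suffices to bound the number of \emph{reachable} and \emph{pairwise distinct} pairs. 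Crucially, factor-free languages are bifix-free and hence also satisfy Propositions \ref{prop:pf} and \ref{prop:sf}, so a single counting argument yields the upper bounds for both classes at once; only the witnesses must be verified to be factor-free in order to cover both cases.

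First I would extract the two structural reductions common to every operation. By Proposition \ref{prop:sf} the initial quotients $K_\eps$ and $L_\eps$ are uniquely reachable, so no nonempty word returns either coordinate to its initial value; hence every pair $(K_\eps,L_j)$ with $L_j\neq L_\eps$ and every pair $(K_i,L_\eps)$ with $K_i\neq K_\eps$ is unreachable, removing $m+n-2$ pairs. By Proposition \ref{prop:pf} the only accepting quotient of each operand is $\eps$, and since $\eps_a=\emp$ the accepting state sends every letter to the sink $\emp$. These two facts, together with the behaviour of the quotients $\eps$ and $\emp$ under each boolean operation, drive all the remaining collapses.

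Next I would carry out the per-operation count. For \emph{intersection}, a pair is forced to $\emp$ whenever a coordinate is $\emp$, and also when one coordinate is $\eps$ and the other is not accepting (since $\eps\cap L_j=\emp$ unless $L_j=\eps$); after discarding the unreachable pairs, what remains is the single accepting pair $(\eps,\eps)$, the initial pair, the $(m-3)(n-3)$ internal-internal pairs, and one empty quotient, giving $3+(m-3)(n-3)=mn-3(m+n-4)$. For \emph{difference}, note first that $K\setminus L\subseteq K$ is prefix-free, so all pairs $(\eps,L_j)$ with $L_j\neq\eps$ collapse to the single accepting quotient $\eps$; the pairs with first coordinate $\emp$ together with $(\eps,\eps)$ collapse to $\emp$; and for each non-accepting first coordinate $K_i$ the two values $L_j=\eps$ and $L_j=\emp$ yield the \emph{same} quotient, since $K_i\setminus\eps=K_i\setminus\emp=K_i$ with identical continuations, merging one entire column. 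The count then gives $3+(m-3)(n-2)=mn-(2m+3n-9)$. For \emph{union} and \emph{symmetric difference} the only extra collapses occur at the $\eps$-level: for union the three pairs $(\eps,\eps),(\eps,\emp),(\emp,\eps)$ all give $\eps$, a saving of $2$; for symmetric difference $(\eps,\emp)$ and $(\emp,\eps)$ give $\eps$ while $(\eps,\eps)$ and $(\emp,\emp)$ both give $\emp$, again a saving of $2$. In either case the total reduction is $(m+n-2)+2=m+n$, i.e.\ $mn-(m+n)$.

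Finally, for tightness I would construct explicit witnesses over a ternary alphabet $\{a,b,c\}$ and verify that the product automaton realizes every counted quotient. The natural source of factor-free witnesses is Proposition \ref{prop:construct}(3): languages of the form $aK'a$ and $aL'a$ are factor-free, hence bifix-free, so one family can settle both classes at once. One letter would permute the internal states of $K$, a second those of $L$, and the third (with the bracketing $a$'s) would drive the operands to their accepting and empty quotients so as to populate the accepting pairs that distinguish the operations. I expect the main obstacle to be precisely this tightness part: one must design witness families meeting \emph{all} the bounds, prove they are genuinely factor-free, and then verify both reachability of every claimed pair and pairwise distinguishability in the product, which is where the third letter is actually needed. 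The upper-bound counting, by contrast, is routine once the two structural reductions above are in place.
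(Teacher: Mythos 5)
Your upper-bound argument is correct and is essentially the paper's: the unique-reachability reduction of $m+n-2$ from Proposition~\ref{prop:sf}, followed by the operation-specific collapses involving the quotients $\eps$ and $\emp$ guaranteed by Proposition~\ref{prop:pf}. Your per-operation tallies ($3+(m-3)(n-3)$ for intersection, $3+(m-3)(n-2)$ for difference, a further saving of $2$ for union and symmetric difference) all simplify to the stated bounds, and the observation that a single count covers both the bifix-free and factor-free cases matches the paper. One small quibble: for difference, the collapse of $(\eps,L_j)$ to $\eps$ follows directly from $\eps\setminus L_j=\eps$ when $\eps\notin L_j$, not from the prefix-freeness of $K\setminus L$; but the conclusion is right.

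The genuine gap is tightness, which is an explicit claim of the theorem and roughly half of the paper's proof. You do not exhibit witnesses; you only announce that you \emph{would} construct some, propose Proposition~\ref{prop:construct}(3) (languages of the form $aK'a$) as a likely source, and then state that you ``expect the main obstacle to be precisely this tightness part.'' That is a plan, not a proof: nothing is verified about reachability of the $(m-3)(n-3)$ internal pairs, about distinguishing them for each of the four operations, or even about the quotient complexity of the proposed witnesses being exactly $m$ and $n$. The paper instead uses the concrete ternary languages $K=a(c^*(a\cup b))^{m-3}$ and $L=a(b^*(a\cup c))^{n-3}$ (which are \emph{not} of the form $aK'a$), proves factor-freeness by counting occurrences of $\{a,b\}$ in proper factors, reaches all states of the cross-product by words in $ab^*c^*\cup ac^*b^*$ (the letters $b$ and $c$ advance one coordinate while fixing the other, rather than permuting anything), and then gives separate distinguishability arguments for intersection, difference, union, and symmetric difference. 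Without some such concrete construction and verification, the statement ``all the bounds are tight if $|\Sigma|\ge3$'' remains unproved.
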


\begin{proof} 
 Since $K$ and $L$ are bifix-free,
 by unique reachability we get a reduction of $m+n-2$ from the general bound $mn$.
 Moreover, both languages $K$ and $L$ have $\eps$ and $\emptyset$  as quotients.
 For intersection,
 we have $\emptyset \cap L_w=K_w\cap \emptyset = \emptyset$,
 and the quotients $\eps \cap L_w$ and $K_w\cap \eps$
 are either empty or equal to $\eps$.
 This gives the upper bound. 
 For difference,
 we eliminate $m+n-2$ quotients by unique reachability,
 $n-2$ quotients by the fact that $\emp\setminus L_w=\emp$
 (keeping only one representative $\emp\setminus\emp$),
 $m-2$ quotients by the fact that $K_w\setminus \emp= K_w\setminus\eps$
 (keeping $K_w\setminus\emp$ as a representative),
 and $n-3$ more quotients by the rule $\eps\setminus L_w=\eps$,
 for a total reduction of $(2m+3n-9)$.
  For union,
 we have the unique reachability reduction of $m+n-2$,
 and a further reduction of 2
 by the rule
 $\eps\cup\eps=\eps\cup\emptyset=\emptyset \cup\eps=\eps$.
  For symmetric difference,
 we note that 
 $\eps \oplus \eps=\emptyset \oplus \emptyset=\emptyset$
 and $\eps \oplus \emptyset=\emptyset \oplus \eps =\eps$.

 For tightness, 
 consider  $K=a(c^*(a\cup b))^{m-3}$, 
 $L=a(b^*(a\cup c))^{n-3}$; see
 Figure~\ref{fig:factor}.
 If $w\in K$, then $w=av$ for some word $v$ containing $m-3$ 
 occurences of symbols from $\{a,b\}$ and ending in $a$ or $b$.
 This means that no proper factor of $w$ \nopagebreak is in $K$,
 and so $K$ is factor-free.
 A similar proof applies to $L$.
 
 \begin{figure}[b]
 \begin{center}
 \setlength{\unitlength}{0.00039370in}
\begingroup\makeatletter\ifx\SetFigFont\undefined%
\gdef\SetFigFont#1#2#3#4#5{%
  \reset@font\fontsize{#1}{#2pt}%
  \fontfamily{#3}\fontseries{#4}\fontshape{#5}%
  \selectfont}%
\fi\endgroup%
{\renewcommand{\dashlinestretch}{30}
\begin{picture}(11672,3449)(0,-10)
\put(5379.415,973.439){\arc{447.624}{2.4860}{6.8999}}
\blacken\thicklines
\path(5566.133,984.051)(5562.000,844.000)(5637.773,961.855)(5566.133,984.051)
\thinlines
\put(3864.415,981.439){\arc{447.624}{2.4860}{6.8999}}
\blacken\thicklines
\path(4051.133,992.051)(4047.000,852.000)(4122.773,969.855)(4051.133,992.051)
\thinlines
\put(8177.415,981.439){\arc{447.624}{2.4860}{6.8999}}
\blacken\thicklines
\path(8364.133,992.051)(8360.000,852.000)(8435.773,969.855)(8364.133,992.051)
\thinlines
\put(11237.415,988.439){\arc{447.624}{2.4860}{6.8999}}
\blacken\thicklines
\path(11424.133,999.051)(11420.000,859.000)(11495.773,976.855)(11424.133,999.051)
\thinlines
\put(3871.415,2863.439){\arc{447.624}{2.4860}{6.8999}}
\blacken\thicklines
\path(4058.133,2874.051)(4054.000,2734.000)(4129.773,2851.855)(4058.133,2874.051)
\thinlines
\put(2356.415,2871.439){\arc{447.624}{2.4860}{6.8999}}
\blacken\thicklines
\path(2543.133,2882.051)(2539.000,2742.000)(2614.773,2859.855)(2543.133,2882.051)
\thinlines
\put(6669.415,2871.439){\arc{447.624}{2.4860}{6.8999}}
\blacken\thicklines
\path(6856.133,2882.051)(6852.000,2742.000)(6927.773,2859.855)(6856.133,2882.051)
\thinlines
\put(9729.415,2878.439){\arc{447.624}{2.4860}{6.8999}}
\blacken\thicklines
\path(9916.133,2889.051)(9912.000,2749.000)(9987.773,2866.855)(9916.133,2889.051)
\thinlines
\put(9718,452){\ellipse{810}{810}}
\put(8178,442){\ellipse{810}{810}}
\put(11259,455){\ellipse{810}{810}}
\put(9721,455){\ellipse{720}{720}}
\put(5393,445){\ellipse{810}{810}}
\put(3871,439){\ellipse{810}{810}}
\put(8210,2342){\ellipse{810}{810}}
\put(6670,2332){\ellipse{810}{810}}
\put(2363,2329){\ellipse{810}{810}}
\put(823,2333){\ellipse{810}{810}}
\put(9751,2345){\ellipse{810}{810}}
\put(8213,2345){\ellipse{720}{720}}
\put(3885,2335){\ellipse{810}{810}}
\put(2331,443){\ellipse{810}{810}}
\path(8608,447)(9283,447)
\blacken\thicklines
\path(9148.000,409.500)(9283.000,447.000)(9148.000,484.500)(9148.000,409.500)
\thinlines
\path(10138,447)(10813,447)
\blacken\thicklines
\path(10678.000,409.500)(10813.000,447.000)(10678.000,484.500)(10678.000,409.500)
\thinlines
\path(7071,447)(7746,447)
\blacken\thicklines
\path(7611.000,409.500)(7746.000,447.000)(7611.000,484.500)(7611.000,409.500)
\thinlines
\path(5804,447)(6479,447)
\blacken\thicklines
\path(6344.000,409.500)(6479.000,447.000)(6344.000,484.500)(6344.000,409.500)
\thinlines
\path(4274,447)(4949,447)
\blacken\thicklines
\path(4814.000,409.500)(4949.000,447.000)(4814.000,484.500)(4814.000,409.500)
\thinlines
\path(2744,447)(3419,447)
\blacken\thicklines
\path(3284.000,409.500)(3419.000,447.000)(3284.000,484.500)(3284.000,409.500)
\thinlines
\path(1520,447)(1895,447)
\blacken\thicklines
\path(1760.000,409.500)(1895.000,447.000)(1760.000,484.500)(1760.000,409.500)
\thinlines
\path(7100,2337)(7775,2337)
\blacken\thicklines
\path(7640.000,2299.500)(7775.000,2337.000)(7640.000,2374.500)(7640.000,2299.500)
\thinlines
\path(8630,2337)(9305,2337)
\blacken\thicklines
\path(9170.000,2299.500)(9305.000,2337.000)(9170.000,2374.500)(9170.000,2299.500)
\thinlines
\path(5563,2337)(6238,2337)
\blacken\thicklines
\path(6103.000,2299.500)(6238.000,2337.000)(6103.000,2374.500)(6103.000,2299.500)
\thinlines
\path(4296,2337)(4971,2337)
\blacken\thicklines
\path(4836.000,2299.500)(4971.000,2337.000)(4836.000,2374.500)(4836.000,2299.500)
\thinlines
\path(2766,2337)(3441,2337)
\blacken\thicklines
\path(3306.000,2299.500)(3441.000,2337.000)(3306.000,2374.500)(3306.000,2299.500)
\thinlines
\path(1236,2337)(1911,2337)
\blacken\thicklines
\path(1776.000,2299.500)(1911.000,2337.000)(1776.000,2374.500)(1776.000,2299.500)
\thinlines
\path(12,2337)(387,2337)
\blacken\thicklines
\path(252.000,2299.500)(387.000,2337.000)(252.000,2374.500)(252.000,2299.500)
\put(6613,379){\makebox(0,0)[lb]{\smash{{\SetFigFont{8}{9.6}{\familydefault}{\mddefault}{\updefault}$\cdots$}}}}
\put(5322,403){\makebox(0,0)[lb]{\smash{{\SetFigFont{7}{8.4}{\familydefault}{\mddefault}{\updefault}$3$}}}}
\put(3778,395){\makebox(0,0)[lb]{\smash{{\SetFigFont{7}{8.4}{\familydefault}{\mddefault}{\updefault}$2$}}}}
\put(7851,387){\makebox(0,0)[lb]{\smash{{\SetFigFont{6}{7.2}{\familydefault}{\mddefault}{\updefault}$n-2$}}}}
\put(9420,395){\makebox(0,0)[lb]{\smash{{\SetFigFont{6}{7.2}{\familydefault}{\mddefault}{\updefault}$n-1$}}}}
\put(11144,409){\makebox(0,0)[lb]{\smash{{\SetFigFont{6}{7.2}{\familydefault}{\mddefault}{\updefault}$n$}}}}
\put(10152,612){\makebox(0,0)[lb]{\smash{{\SetFigFont{8}{9.6}{\familydefault}{\mddefault}{\updefault}$a,b,c$}}}}
\put(8750,605){\makebox(0,0)[lb]{\smash{{\SetFigFont{8}{9.6}{\familydefault}{\mddefault}{\updefault}$a,c$}}}}
\put(7197,605){\makebox(0,0)[lb]{\smash{{\SetFigFont{8}{9.6}{\familydefault}{\mddefault}{\updefault}$a,c$}}}}
\put(5937,605){\makebox(0,0)[lb]{\smash{{\SetFigFont{8}{9.6}{\familydefault}{\mddefault}{\updefault}$a,c$}}}}
\put(4399,612){\makebox(0,0)[lb]{\smash{{\SetFigFont{8}{9.6}{\familydefault}{\mddefault}{\updefault}$a,c$}}}}
\put(2967,598){\makebox(0,0)[lb]{\smash{{\SetFigFont{8}{9.6}{\familydefault}{\mddefault}{\updefault}$a$}}}}
\put(5317,1325){\makebox(0,0)[lb]{\smash{{\SetFigFont{8}{9.6}{\familydefault}{\mddefault}{\updefault}$b$}}}}
\put(8098,1317){\makebox(0,0)[lb]{\smash{{\SetFigFont{8}{9.6}{\familydefault}{\mddefault}{\updefault}$b$}}}}
\put(3763,1332){\makebox(0,0)[lb]{\smash{{\SetFigFont{8}{9.6}{\familydefault}{\mddefault}{\updefault}$b$}}}}
\put(10894,1340){\makebox(0,0)[lb]{\smash{{\SetFigFont{8}{9.6}{\familydefault}{\mddefault}{\updefault}$a,b,c$}}}}
\put(5105,2269){\makebox(0,0)[lb]{\smash{{\SetFigFont{8}{9.6}{\familydefault}{\mddefault}{\updefault}$\cdots$}}}}
\put(2293,3222){\makebox(0,0)[lb]{\smash{{\SetFigFont{8}{9.6}{\familydefault}{\mddefault}{\updefault}$c$}}}}
\put(3831,3215){\makebox(0,0)[lb]{\smash{{\SetFigFont{8}{9.6}{\familydefault}{\mddefault}{\updefault}$c$}}}}
\put(6628,3207){\makebox(0,0)[lb]{\smash{{\SetFigFont{8}{9.6}{\familydefault}{\mddefault}{\updefault}$c$}}}}
\put(755,2263){\makebox(0,0)[lb]{\smash{{\SetFigFont{7}{8.4}{\familydefault}{\mddefault}{\updefault}$1$}}}}
\put(3814,2293){\makebox(0,0)[lb]{\smash{{\SetFigFont{7}{8.4}{\familydefault}{\mddefault}{\updefault}$3$}}}}
\put(2270,2285){\makebox(0,0)[lb]{\smash{{\SetFigFont{7}{8.4}{\familydefault}{\mddefault}{\updefault}$2$}}}}
\put(6343,2277){\makebox(0,0)[lb]{\smash{{\SetFigFont{6}{7.2}{\familydefault}{\mddefault}{\updefault}$m-2$}}}}
\put(9621,2299){\makebox(0,0)[lb]{\smash{{\SetFigFont{6}{7.2}{\familydefault}{\mddefault}{\updefault}$m$}}}}
\put(7883,2285){\makebox(0,0)[lb]{\smash{{\SetFigFont{6}{7.2}{\familydefault}{\mddefault}{\updefault}$m-1$}}}}
\put(9416,3215){\makebox(0,0)[lb]{\smash{{\SetFigFont{8}{9.6}{\familydefault}{\mddefault}{\updefault}$a,b,c$}}}}
\put(8652,2487){\makebox(0,0)[lb]{\smash{{\SetFigFont{8}{9.6}{\familydefault}{\mddefault}{\updefault}$a,b,c$}}}}
\put(7242,2488){\makebox(0,0)[lb]{\smash{{\SetFigFont{8}{9.6}{\familydefault}{\mddefault}{\updefault}$a,b$}}}}
\put(5697,2487){\makebox(0,0)[lb]{\smash{{\SetFigFont{8}{9.6}{\familydefault}{\mddefault}{\updefault}$a,b$}}}}
\put(4429,2495){\makebox(0,0)[lb]{\smash{{\SetFigFont{8}{9.6}{\familydefault}{\mddefault}{\updefault}$a,b$}}}}
\put(2891,2495){\makebox(0,0)[lb]{\smash{{\SetFigFont{8}{9.6}{\familydefault}{\mddefault}{\updefault}$a,b$}}}}
\put(1459,2495){\makebox(0,0)[lb]{\smash{{\SetFigFont{8}{9.6}{\familydefault}{\mddefault}{\updefault}$a$}}}}
\put(2263,373){\makebox(0,0)[lb]{\smash{{\SetFigFont{7}{8.4}{\familydefault}{\mddefault}{\updefault}$1$}}}}
\put(1550,87){\makebox(0,0)[lb]{\smash{{\SetFigFont{8}{9.6}{\familydefault}{\mddefault}{\updefault}$L'$}}}}
\put(20,1977){\makebox(0,0)[lb]{\smash{{\SetFigFont{8}{9.6}{\familydefault}{\mddefault}{\updefault}$K'$}}}}
\end{picture}
}
 \end{center}
 \caption{Factor-free languages meeting the upper bounds for boolean operations.}
 \label{fig:factor}
 \end{figure}

\label{factor}
\begin{figure}[h]
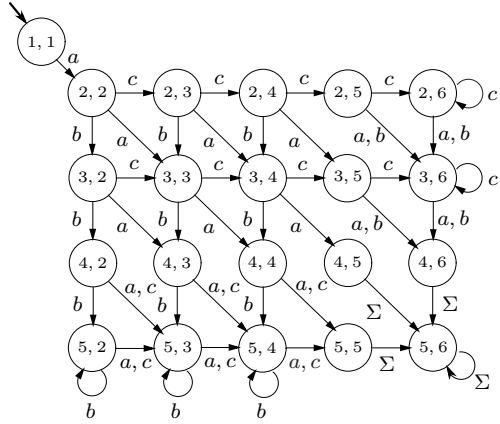

\begin{center}
\input crossprod.eepic
\end{center}
\caption{Cross product automaton for boolean operations
         on factor-free languages from Figure~\ref{fig:factor};
         $m=5, n=6$.}
\label{fig:cross}
\end{figure}

 In the cross-product automaton of Figure~\ref{fig:cross}
for the boolean operations on languages $K$ and $L$,
 all the states are reached from the initial state $(1,1)$
 by a word in $ab^*c^*\cup ac^*b^*$, except for  state $(m-1,n-1)$
 which is reached from state $(m-2,n-2)$ by $a$.

 For intersection,
 the only accepting state  
 is $(m-1,n-1)$.
 All the rejecting states in rows $m-1$ and $m$
 and columns $n-1$ and $n$
are empty.
 The word $a$ is accepted only from 
 $(m-2,n-2)$,
  word $b^{m-2-i}c^{n-2-j}a$  ($2\le i\le m-2$, $2\le j\le n-2$)
 only from state
 $(i,j)$,
 and the word $ab^{n-4}c^{n-4}a$ only from state (1,1).
 This  gives $mn-3(m+n-4)$ reachable and pairwise distinguishable states.
  
 For difference,
 all the states of the cross-product automaton in row $m-1$, except for $(m-1,n-1)$,
 are accepting and accept $\eps$.
 All the states in row $m$,
as well as state $(m-1,n-1)$ are empty.
 Moreover, states $(i,n-1)$ and $(i,n)$ with $2\le i \le m-2$ are equivalent.
 The word $ab^{m-3}$ is accepted only from $(1,1)$.
 Now let
 $(i,j)$ and $(k,\ell)$, 
 where $2\le i \le n-1, 2\le j \le m-2$,
 be two distinct states. 
 If $i<k$, then $c^nb^{m-1-i}$
 is accepted from $(i,j)$ but not from $(k,\ell)$.
 If $i=k$ and $j<\ell$, then $b^{m-2-i}c^{n-2-j}a$
 is not accepted from $(i,j)$
 but is accepted from $(k,\ell)$.
 This means that $mn-(2m+3n-9)$ states  are pairwise distinguishable.

 For union,
 all the states in row $m-1$ and in column $n-1$ are accepting,
 and moreover, the three states $(m,n-1)$, $(m-1,n-1)$, and $(m-1,n)$
 are equivalent.
 The word $ab^{m-3}$ is accepted only from $(1,1)$.
 Consider two distinct rejecting states $(i,j)$ and $(k,\ell)$.
 If $i<k$, then $c^nb^{m-1-i}$ is accepted from $(i,j)$ but not from $(k,\ell)$.
 If $j<\ell$, then $b^mc^{n-1-j}$ is accepted from $(i,j)$ but not from $(k,\ell)$.
 Now consider two distinct accepting states different from $(m,n-1)$ and $(m-1,n)$.
 By $c$, the two states either  go to two states 
 one of which is accepting and the other rejecting, 
 or to two distinct rejecting, and hence distinguishable, states.
 This proves distinguishability of $mn-(m+n)$ states.
 
 The proof for symmetric difference is the same as for union,
 except that  state $(m-1,n-1)$ is empty and states $(m,n-1)$ and $(m-1,n)$ are equivalent.
\end{proof}
\newpage

The next result shows that the upper bounds 
for intersection and difference of factor-free languages 
are also tight in the binary case.

\begin{proposition} [\bf Intersection and Difference: Binary Factor-Free Languages]
\label{prop5}
 There exist binary factor-free languages $K$ and $L$
 with $\kappa(K) = m$ and $\kappa(L) = n$, where $m,n \ge 6$,
 such that\\
   \hglue10pt  1.  $\kappa(K \cap L) \ge mn-3(m+n-4)$ and \\
   \hglue10pt  2.  $\kappa(K \setminus L)\ge mn-(2m+3n-9)$.
\end{proposition}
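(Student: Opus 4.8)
The bounds in (1) and (2) are already established as \emph{upper} bounds in Theorem~\ref{thm:bifix,factor}, so the whole task is tightness over a two-letter alphabet. The plan is to exhibit binary factor-free languages $K$ and $L$ with $\kappa(K)=m$, $\kappa(L)=n$, and then, for the intersection and difference acceptance conditions separately, analyze the cross-product automaton on the state set $\{P_\alpha\}\times\{R_\beta\}$ of quotients of $K$ and of $L$, exactly as in the proof of Theorem~\ref{thm:bifix,factor}. Concretely I would (a) exhibit, for every target product state, a word over $\{a,b\}$ reaching it, and (b) exhibit, for every pair of target states, a suffix accepted from exactly one of them, arriving at $mn-3(m+n-4)$ distinguishable states in the intersection case and $mn-(2m+3n-9)$ in the difference case.

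The first and most delicate step is choosing the witnesses. The ternary witnesses of Theorem~\ref{thm:bifix,factor} succeed because three letters give independent control ($a$ advances both coordinates, $b$ only the $K$-coordinate, $c$ only the $L$-coordinate), so the entire grid of product states is reached. With two letters this is unavailable, so I would build $K$ and $L$ on the principle behind $a(b^*a)^{m-3}$: each word carries a \emph{fixed} number of ``significant'' letters, while the ``filler'' letter may loop only at interior states and never at a word boundary, which makes the language factor-free (this can also be read off from Proposition~\ref{prop:construct}, with finiteness controlled by Proposition~\ref{prop:finite}). The extra requirement, forced by the intersection part, is that $K$ and $L$ must \emph{end in the same letter}: otherwise the non-final quotients of $K$ (words ending in $a$) and of $L$ (words ending in $b$) are disjoint, every $K_u\cap L_v$ collapses to the empty quotient, and the bound cannot even be approached. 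I would therefore take $K=a(b^*a)^{m-3}$ together with a suitably ``shifted'' binary factor-free $L$ ending in $a$ (for instance of the shape $ab(a^*b)^{\cdots}a$, which has complexity $n$), tuning the shift so that the reachable region of the product has exactly the required cardinality.

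With the witnesses fixed, reachability is proved by navigation words: reading $a$ advances the $K$-coordinate and holds the $L$-coordinate, while reading $b$ advances the $L$-coordinate and holds the $K$-coordinate, so once both coordinates have been moved off their initial transient states I can steer them almost independently and fill most of the grid; the accepting, initial, and boundary product states are treated separately as in Theorem~\ref{thm:bifix,factor}. Distinguishability is then routine: a state with a smaller $K$- or $L$-index admits a word (a block of the filler letter followed by enough significant letters) accepted from it but not from a state of larger index, and symmetrically in the other coordinate. For the difference condition the surviving count is larger because $K_u\setminus L_v$ is non-empty far more often than $K_u\cap L_v$, and the bookkeeping follows the difference analysis of Theorem~\ref{thm:bifix,factor}.

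The main obstacle is precisely this witness design and its reachability analysis. Factor-freeness forbids any letter from both entering and looping at a non-empty quotient, since pumping that letter at a word boundary would produce two words of the language one of which is a factor of the other; hence a coordinate cannot be ``frozen'' from the very first step, each operand must pass through a transient state before its filler letter can loop, and such a transient tends to cost a whole row or column of reachable product states. The heart of the proof is to choose $K$, $L$, and in particular the shift in $L$, so that these unavoidable losses fall only on rows and columns already discounted in the bound, leaving exactly $mn-3(m+n-4)$ and $mn-(2m+3n-9)$ surviving states; this is what requires $m,n\ge6$ rather than the $m,n\ge4$ of the ternary case, and it is where essentially all of the work lies, the reachability and distinguishability verifications being mechanical once the witnesses are correctly pinned down.
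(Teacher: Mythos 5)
There is a genuine gap: the proposal never actually produces the witnesses, and the entire content of this proposition is the witness construction together with its verification. You correctly identify the overall strategy (the same one the paper uses: build a concrete binary pair, form the cross-product automaton, and prove reachability and distinguishability of the required number of states), and your observation that both languages must end in the same letter for the intersection bound to be approachable is sound. But your second language is left as an unspecified ``shape'' $ab(a^*b)^{\cdots}a$ with an exponent and a ``shift'' still to be ``tuned,'' no factor-freeness proof is given for it, and no reachability or distinguishing words are exhibited for any concrete pair. Declaring the remaining verification ``mechanical once the witnesses are correctly pinned down'' concedes exactly the point at issue. For comparison, the paper's proof takes $L=a(b^*a)^{n-3}$ (your $K$, with roles swapped) but pairs it with a quite different $K$: an \emph{infinite} factor-free language whose automaton has a $b$-cycle from the last non-final, non-accepting state back to state $2$ and a single $a$-loop at one interior state; it then determines precisely which product states are unreachable and constructs the distinguishing words $w_{ij}$ by an explicit recursion. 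That witness does not fit your stated design principle that ``each word carries a fixed number of significant letters'' with a filler letter looping only at interior states.

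Moreover, the structural claim you use to guide the design --- that factor-freeness forbids a letter from both entering and looping at a non-empty quotient, because pumping it would produce a word and a proper factor of it --- is false. Pumping a letter at such a state produces a word and a proper \emph{subword} of it, not necessarily a factor: the language $ab^+c$ is factor-free ($ab^ic$ can be a factor of $ab^jc$ only if it is simultaneously a prefix and a suffix, forcing $i=j$), yet in its quotient automaton the letter $b$ both enters and loops at the non-empty quotient $b^*c$. Since this false principle is what drives your choice of witness template, there is no reason to believe that ``tuning the shift'' within that template ever yields languages meeting the bounds; the heart of the proof is missing.
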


\begin{proof}
 Let $K$ and $L$
 be the binary factor-free languages accepted
 by the quotient automata of Figure~\ref{fig:binary_inter_diff}.
\begin{figure}[t]
 \centering
 \input{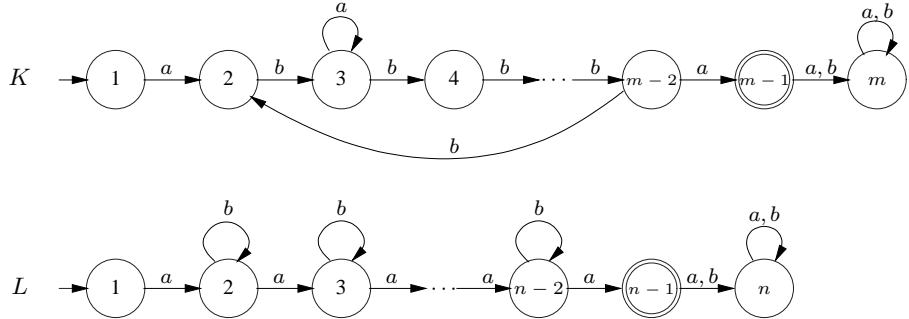}
 \caption{Binary factor-free witnesses  for intersection and difference.
         Missing transitions in the automaton accepting $K$ ($L$)  all go to the empty state $m$ ($n$).}
 \label{fig:binary_inter_diff}
\end{figure}

\begin{figure}[t]
 \centering
 \input{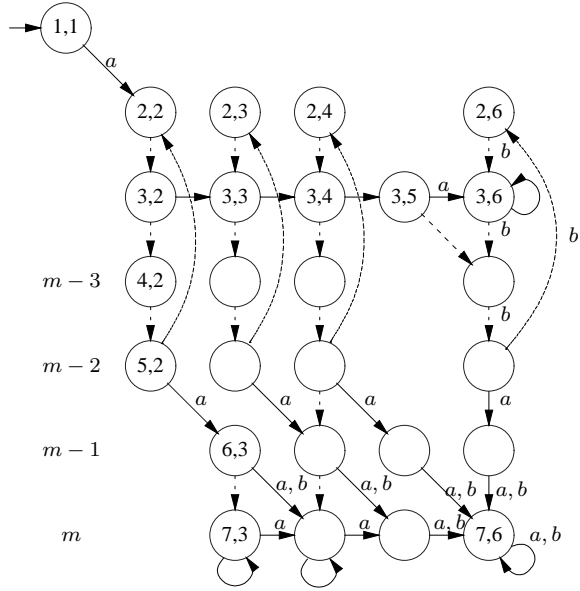}
 \caption{Cross-product automaton for $m=6,n=7$. Missing transitions all go to state $(7,6)$.}
 \label{fig:binFacProd}
\end{figure}

 In the corresponding cross-product automaton of Figure~\ref{fig:binFacProd},
 except for $(1,1)$, no states in row 1 or column 1 are reachable.
 Also, states $(m-1,2)$ and $(m,2)$ are unreachable, as are the states in column $n-1$, except $(3,n-1)$, $(m-1,n-1)$, and $(m,n-1)$. The remaining states are all reachable.
 
 For intersection, the only accepting state is $(m-1,n-1)$,
 and all the other states in the last two rows and columns
 are empty.
 We will prove that states $(1,1)$, $(i,j)$ with $2\le i\le m-2$ and $2\le j\le n-2$, $(m-1,n-1)$,
 and $(m,n)$, which represents all the empty states, are all distinguishable.
 Then it follows that
 $\kappa(K\cap L) \ge (m-3)(n-3)+3=mn-3(m+n-4)$.
 
 State $(m,n)$ is the only empty state in our set.
 We show that for each other rejecting state $(i,j)$,
 there exists a word $w_{ij}$ 
 that is accepted only from state $(i,j)$.
 We have $w_{m-2,n-2}=a$ because word $a$ is accepted only from state $(m-2,n-2)$.
 Since only one transition on letter $b$ goes to state  $(m-2,n-2)$,
 and it goes from state $(m-3,n-2)$, the word $ba$
 is accepted only from state $(m-3,n-2)$. 
 Therefore $w_{m-3,n-2}=ba=bw_{m-2,n-2}$.
 For similar reasons we have

\begin{tabular}{ll}
 $w_{i,n-2}=bw_{i+1,n-2}$  & for $i=2,3,\ldots,m-3$,\\
 $w_{3j}=aw_{3,j+1}$       & for $j=2,3,\ldots,n-3$,\\
 $w_{2j}=bw_{3j}$          & for $j=2,3,\ldots,n-3$,\\
 $w_{m-2,j}=bw_{2j}$       & for $j=2,3,\ldots,n-3$,\\
 $w_{ij}=bw_{i+1,j}$       & for $i=4,5,\ldots,m-3$ and $j=2,3,\ldots,n-3$,\\
 $w_{11}=aw_{22}$,         &
\end{tabular}\\

\bigskip
\noindent
 which proves that $mn-3(m+n-4)$ states  are pairwise distinguishable.

 In the case of difference,
 all the states in row $m$, as well as state $(m-1,n-1)$ are empty.
 All the other states in row $m-1$ accept $\eps$, and so are equivalent.
 For each $i$ with $2\le i \le m-2$, 
 states $(i,n-1)$ and $(i,n)$ are equivalent.
 Among the other reachable states consider two distinct states $p$ and $q$.
 If they are in different rows,
 then by a word in $b^*$ we can send  $p$ to a state $p'$ in row 3,
 and $q$  to a state $q'$ that is not in row 3.
 Now by $a^n$, state $q'$ goes to the empty state,
 while $p'$ goes to state $(3,n)$ that is not empty.
 Two distinct states in the same row go by a word in $b^*$ to row 3. 
 Then, by a word in $a^*$, the first goes to 
 state $(3,n-2)$ while  the second to $(3,n)$,
 and now $b^{m-2-3}a$ distinguishes them.
 In summary,  $\kappa(K\setminus L) \ge (m-3)(n-3) +m-3 +3=mn-(2m+3n-9)$.
\end{proof}

The next proposition gives  lower bounds for union and symmetric difference
of binary bifix-free languages.

\begin{proposition} [\bf Union, Symmetric Difference: Binary Bifix-Free Languages; Lower Bound]
\label{prop6}
 Let $m,n \ge 6$. 
 There exist binary bifix-free languages $K$ and $L$
 with $\kappa(K) = m$ and $\kappa(L) = n$ 
 such that
 $\kappa(K \cup L), \kappa(K \oplus L) \ge mn-(m+n)-2$.
\end{proposition}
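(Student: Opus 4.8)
The plan is to follow the template set by the proofs of Theorem~\ref{thm:bifix,factor} and Proposition~\ref{prop5}: exhibit explicit binary bifix-free witnesses $K$ and $L$ by their quotient automata, and then analyze the cross-product automaton for $K\cup L$ and $K\oplus L$. For the witnesses I would use automata with $m$ and $n$ states over $\{a,b\}$ in which the unique accepting quotient is $\eps$ and sends both letters to the empty quotient, so that $K$ and $L$ are prefix-free by Proposition~\ref{prop:pf}, and in which the initial quotient is reached by no non-empty word while the empty quotient is present, so that $K$ and $L$ are suffix-free by Proposition~\ref{prop:sf}; together these make $K$ and $L$ bifix-free. A natural candidate is a modification of the binary factor-free witnesses of Figure~\ref{fig:binary_inter_diff}, loosened to exploit the extra freedom of bifix- (rather than factor-) freeness while retaining a single branching backbone so that reachability stays easy to track. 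Checking bifix-freeness via Propositions~\ref{prop:pf} and~\ref{prop:sf} is then routine.

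The first substantive step is reachability in the cross-product automaton. By Proposition~\ref{prop:sf} the initial quotients of $K$ and $L$ are uniquely reachable, so in the product only $(1,1)$ is reachable among all states of row $1$ and column $1$; this already accounts for the reduction of $m+n-2$ present in the ternary bound of Theorem~\ref{thm:bifix,factor}. I would then verify that with only two letters a reduction of exactly two beyond the ternary case occurs, arising because two product states that a third letter would reach (or separate) are now either unreachable or merged. Pinning these down to exactly two, and no more, is where the witnesses will have to be tuned.

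For the union the accepting states of the product are precisely those in row $m-1$ or column $n-1$, and as in the proof of Theorem~\ref{thm:bifix,factor} the states $(m,n-1)$, $(m-1,n-1)$, $(m-1,n)$ collapse into a single class, giving a further reduction of two. I would then supply distinguishing words for the remaining reachable states, separating two states in different rows by a word in $a^*b^*$ (or $b^*a^*$) that drives one component to an accepting quotient while leaving the other rejecting, and separating two states in the same row by a shorter word acting on the second component only. Counting the reachable states and removing the merged class gives $mn-(m+n-2)-2-2=mn-(m+n)-2$ pairwise distinguishable states. The argument for symmetric difference is identical except that $(m-1,n-1)$ is now empty while $(m,n-1)$ and $(m-1,n)$ stay equivalent, yielding the same count.

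The hardest part will be the distinguishability analysis under the binary restriction: with only two letters the candidate distinguishing words are essentially forced, so one must confirm both that the two states declared unreachable (or merged) really behave that way rather than being reached or separated by some longer word, and that every surviving pair is genuinely distinguished. Getting this bookkeeping to land exactly at a deficit of two relative to the ternary bound of Theorem~\ref{thm:bifix,factor}, simultaneously for $\cup$ and $\oplus$, is the delicate point.
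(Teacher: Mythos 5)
Your overall architecture matches the paper's: exhibit binary bifix-free witnesses, build the cross-product automaton, observe that row~$1$ and column~$1$ are unreachable except for $(1,1)$, find exactly two further unreachable states, merge the three accepting states $(m,n-1)$, $(m-1,n-1)$, $(m-1,n)$ for union (respectively, make $(m-1,n-1)$ empty for symmetric difference), and count $mn-(m+n-2)-2-2=mn-(m+n)-2$. However, there is a genuine gap: you never actually produce the witnesses. For a lower-bound statement the witnesses \emph{are} the proof, and everything you defer ("pinning these down to exactly two\ldots is where the witnesses will have to be tuned", "the hardest part will be the distinguishability analysis") is precisely the content that must be supplied. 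Your suggested starting point is also doubtful: the paper's union/symmetric-difference witnesses are not a loosening of the acyclic-flavoured intersection/difference witnesses of Figure~\ref{fig:binary_inter_diff}; they are infinite languages with a cyclic structure, namely $K = a((ba^*)^{m-5}b \cup a)(b((ba^*)^{m-5}b \cup a))^*a$ and $L = a(a\cup b)^{n-4}(b(a\cup b)^{n-4})^*a$, and the reachability argument threads through rows $3,4,\dots,m-2$ and back to row $2$ using the cycles. There is no evidence that a tuned finite-style witness would reach $mn-(m+n)-2$ states.

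There is also a logical error in your plan for certifying bifix-freeness. Proposition~\ref{prop:sf} states only that a non-empty suffix-free language has a uniquely reachable initial quotient and an empty quotient; it is a necessary condition, not a sufficient one, so you cannot conclude suffix-freeness from those two properties of your automata. The paper instead verifies suffix-freeness by checking that the reverse automaton is deterministic and that the reversed language has $\eps$ as its only accepting quotient, hence $K^R$ and $L^R$ are prefix-free by Proposition~\ref{prop:pf}, hence $K$ and $L$ are suffix-free. Your use of Proposition~\ref{prop:pf} for prefix-freeness is fine, since that proposition is an equivalence.
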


\begin{proof}
  Consider the binary languages 
 \begin{eqnarray*}
   K &=& a( (ba^*)^{m-5}b \cup a) (b ((ba^*)^{m-5}b \cup a) )^* a,\\
   L &=& a(a\cup b)^{n-4} (b(a\cup b)^{n-4})^* a.
 \end{eqnarray*}
 Quotient automata for $m=7$ and $n=6$ are shown in Figure~\ref{fig:xxx}.
Since both languages have  $\eps$ as the only accepting 
 quotient, they are prefix-free. Since the reverse automata are deterministic, the 
 reversed languages also have  $\eps$ as the only accepting quotient, and so are 
 prefix-free. Thus both languages are bifix-free.
 
\begin{figure}[t]
\begin{center}
\input{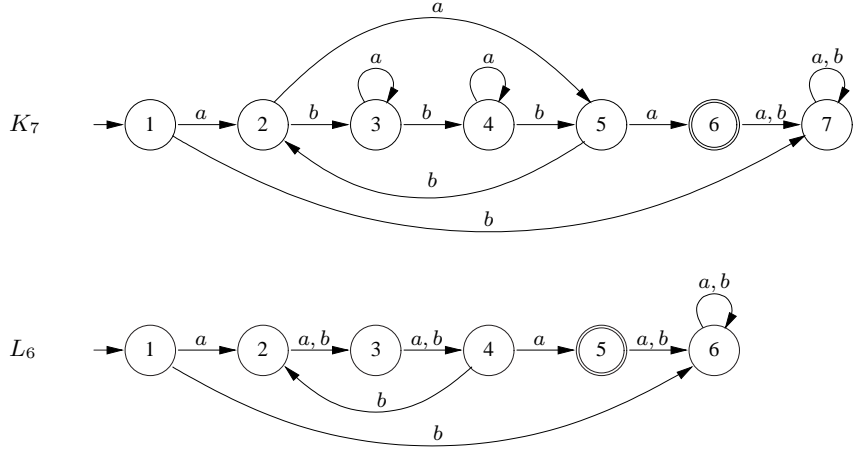}
\end{center}
\caption{Binary bifix-free languages meeting the bound $mn-(m+n)-2$
            for union and symmetric difference.}
\label{fig:xxx}
\end{figure}
\begin{figure}[t]
\begin{center}
\input{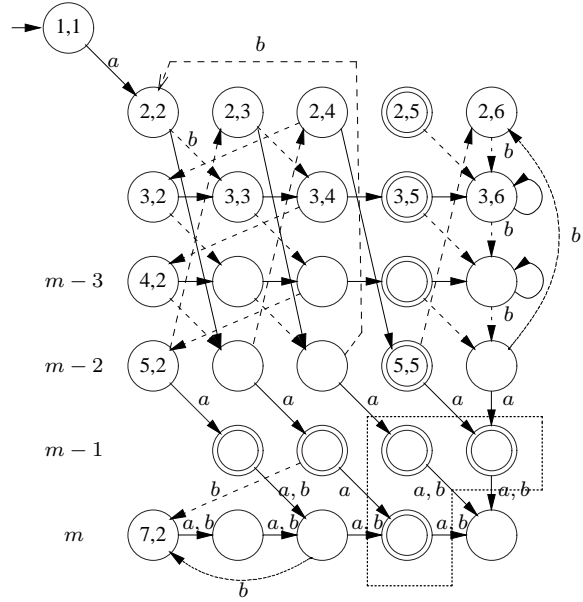}
\end{center}
\caption{Cross-product automaton for automata from Figure~\ref{fig:xxx}, 
        where dashed-transitions are on input $b$, and unspecified transitions go to
         state (7,6).}         
\label{fig:zzz}
\end{figure}
 The cross-product automaton is shown in Figure~\ref{fig:zzz}. 
 States in row~1 and column~1 
 are unreachable, with the exception of the initial state (1,1).
 Also, states $(2,n-1)$ and  $(m-1,2)$ are unreachable. 
 The initial state $(1,1)$ goes to state $(2,2)$ by $a$
 and then to state $(3,3)$ by $b$.
 From $(3,3)$, all the other states in row 3, except for $(3,2)$ are reached by $a$-transitions.
 Next, state $(3,n-2)$ goes to state $(4,2)$ by $b$,
 and then to $(4,j)$ by $a^{j-2}$ ($3 \le j \le n$).
 In this way, all the states in rows $4,5,\ldots,m-3$ can be reached.
 State $(m-3,n-2)$ goes to state $(m-2,2)$ by $b$, and states  $(m-2,j)$ with $j\ge3$,
 except for state $(m-2,n-1)$ that is reached from $(2,n-2)$ by $a$,
 are reached from  states  $(m-3,j-1)$ by $b$.
 States $(2,j)$ with $j\ge3$, except for $(2,n-1)$,
 are reached from    $(m-2,j-1)$ by $b$.
 State $(2,n-2)$ goes to  $(3,2)$ by $b$.
 From states in row $m-2$ all reachable states in row $m-1$ 
 are reached by $a$. State $(m,2)$ is reached by $b$ from $(m-1,n-2)$; 
 from here, all the other states is row $m$ are reached by words in $a^*$.
 
 For union, the three accepting states $(m-1,n-1),(m-1,n)$ and $(m,n-1)$
are equivalent.
 Consider the other reachable states.
 First, let $p=(i,j)$ and $q=(k,\ell)$ be two rejecting states with $i<k$.
 We can use $b$-transi\-tions to get  $p$
 into a state $p'$ in row 3, and $q$ into a state $q'$  in a row $i$ with $i\neq 3$.
 By $a^n$, state $p'$ goes to  $(3,n)$, 
 while $q'$ goes to $(i,n)$.
 Now  $b^{m-2-3}a$ is accepted from $(3,n)$ but not from $(i,n)$.
 Next, let $p$ and $q$ be two distinct rejecting states in the same row.
 If they are in the last row,
 then a word in $a^*$ distinguishes them.
 Otherwise, we can get them
 into states $(3,j)$ and $(3,\ell)$ with $j<\ell$,
 using $b$-transitions. Now  $(3,j)$ accepts $a^{n-1-j}$
 while  $(3,\ell)$ goes to the rejecting state $(3,n)$.
 Finally, consider two distinct accepting states
 different from $(m-1,n)$, $(m,n-1)$.
 By $b$, they go to two distinct rejecting, and so distinguishable,  states.
 The proof for symmetric difference is similar,
 except that now state  $(m-~1,n-~1)$ is empty.
\end{proof}

We now show that the upper bound for union of binary bifix-free languages
is the same as the lower bound
in the proposition above.

\begin{proposition}[\bf Union: Binary Bifix-Free Languages; Upper Bound]
\label{prop7}
 Let \linebreak $m,n\ge 4$
 and let $K$ and $L$ be binary bifix-free languages
 with $\kappa(K)=m$ and $\kappa(L)=n$. Then $\kappa(K\cup L)\le mn-(m+n)-2$.
\end{proposition}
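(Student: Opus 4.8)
My plan is to reuse Theorem~\ref{thm:bifix,factor} and then squeeze out two more collapses that are special to a two‑letter alphabet. Concretely, I would build the cross‑product automaton $\cC$ for $K\cup L$ on the state set $Q_K\times Q_L$ obtained from the quotient automata of $K$ and $L$, with $(p,q)$ accepting iff $p=\eps$ or $q=\eps$. By Propositions~\ref{prop:pf} and~\ref{prop:sf}, each of the two quotient automata has a uniquely reachable initial state with no incoming transitions, a single accepting quotient $\eps$ all of whose transitions go to $\emptyset$, and the sink $\emptyset$. Since $K$ and $L$ are in particular bifix‑free, Theorem~\ref{thm:bifix,factor}(3) already gives $\kappa(K\cup L)\le mn-(m+n)$, and I would record the two mechanisms behind that bound: unique reachability removes the $m+n-2$ states $(K,L_w)$ with $L_w\neq L$ and $(K_w,L)$ with $K_w\neq K$, and the three states $(\eps,\eps)$, $(\eps,\emptyset)$, $(\emptyset,\eps)$ all denote the quotient $\eps$ of $K\cup L$ and hence collapse to one, saving a further $2$.

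The remaining task is therefore purely to exhibit two additional reductions that are forced precisely when $|\Sigma|=2$. First I would verify, by a routine reachability/distinguishability count on the $(m-1)(n-1)+1$ surviving candidate states (the initial state together with the block indexed by the middle, accepting and empty quotients of each side), that with no further coincidences exactly $mn-(m+n)$ classes survive; this confirms that any improvement must come from collapses that a third letter would avoid. Comparing with the tight binary witness of Proposition~\ref{prop6}, the two extra lost states there are the accepting‑adjacent states $(K_a,\eps)$ and $(\eps,L_a)$, which are \emph{unreachable} in that example. So the goal becomes: show that for every binary bifix‑free pair two such boundary states are either unreachable or equivalent to states already counted.

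To attack this I would exploit that over $\Sigma=\{a,b\}$ every reachable state other than $(K,L)$ lies in $\mathrm{Reach}(K_a,L_a)\cup\mathrm{Reach}(K_b,L_b)$; together with the fact that the initial states $K$ and $L$ have no in‑transitions, this two‑way branching is exactly what limits reachability. I would analyse the set $A=\{L_w\mid w\in K\}$ of second components that can accompany first component $\eps$, and symmetrically $B=\{K_w\mid w\in L\}$, and argue that with only two letters these sets must each miss one middle quotient (equivalently, that the accepting row indexed by $\eps$ on the $K$‑side and the accepting column indexed by $\eps$ on the $L$‑side each lose one reachable entry). Feeding this back into the count of the previous paragraph yields two reductions beyond $mn-(m+n)$, giving $\kappa(K\cup L)\le mn-(m+n)-2$, which matches Proposition~\ref{prop6}.

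The hard part will be making the last step coordinate‑free: the identity of the two lost states depends on the labelling of the automata, so the argument cannot simply point at $(K_a,\eps)$ and $(\eps,L_a)$ as in the witness but must single out the deficient pair structurally from the interaction of suffix‑freeness (no transition enters the initial state) with the binary fan‑out. I also expect to handle separately the degenerate cases in which one of the three $\eps$‑states is itself unreachable: there the Theorem~\ref{thm:bifix,factor} collapse contributes less than $2$, so I would have to show that the missing merge is compensated by extra unreachable states, keeping the total deficiency at least $m+n+2$ in every case.
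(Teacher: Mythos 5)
Your proposal sets up the right framework (cross-product automaton, the generic $m+n-2$ loss from unique reachability plus the merging of $(\eps,\eps)$, $(\eps,\emp)$, $(\emp,\eps)$, hence the $mn-(m+n)$ bound of Theorem~\ref{thm:bifix,factor}), and it correctly identifies the remaining task: show that over a two-letter alphabet at least two further states are unreachable or merged. But the proposal never actually carries out that step. Everything after ``the goal becomes'' is a statement of intent --- ``I would analyse the set $A=\{L_w\mid w\in K\}$ \dots and argue that with only two letters these sets must each miss one middle quotient'' --- with no argument supplied, and you yourself flag the missing piece as ``the hard part.'' That missing piece is essentially the entire content of the proposition. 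The paper's proof supplies it by a concrete structural analysis: after normalizing so that each initial state goes by $a$ to a middle state $2$ (and observing, via suffix-freeness, that no other $a$-transition may enter state $2$), one deduces that column~$2$ and row~$2$ of the cross-product can only be entered by $b$; reachability of the states $(i,2)$ then forces a chain $p_{m-3}\stackrel{b}{\to}\cdots\stackrel{b}{\to}p_1\stackrel{b}{\to}m-1\stackrel{b}{\to}m$ of pairwise distinct states, and a case analysis on whether state $1$ feeds the top of this chain exhibits, in every case, an unreachable state beyond the generic ones (and symmetrically for the second automaton). Nothing equivalent to this chain construction and case analysis appears in your proposal.

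A secondary problem is that your proposed intermediate claim --- that the deficiency can always be located in the accepting row and accepting column, i.e.\ that the sets $A$ and $B$ of partners of $\eps$ each miss a middle quotient --- is not correct as a general structural fact. In the paper's analysis the forced unreachable state may be $(p_{m-3},2)$ for a \emph{middle} quotient $p_{m-3}$ of $K$: this state lies in column~$2$, not in the accepting row, and in that case the entire accepting row permitted by the generic bound can still be reachable. So even if you filled in an argument for your stated claim, you would be proving the wrong statement; the correct formulation has to allow the lost states to sit anywhere in row~$2$ or column~$2$ (or be the corner states $(m-1,2)$, $(2,n-1)$), which is exactly what the paper's chain argument delivers.
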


\begin{proof}
  Let $K$  be a bifix-free  language
 accepted by the quotient automaton 
 ${\mathcal A}$   over $\{a,b\}$
 with states $1,2,\ldots,m$, where 1 is the initial state,
 $m-1$  is the only accepting state and it accepts only $\eps$,
 and $m$  is the empty state.
 Let $L$ be a similar language
 accepted by ${\mathcal B}$ with states  $1,2,\ldots,n$,
 initial state 1, state $n-1$ accepting $\varepsilon$, and empty state $n$.
 
 Construct the corresponding cross-product automaton
 with states $(i,j)$,
 where $i$ is a state of ${\mathcal A}$ and $j$ is a state of ${\mathcal B}$.
 In this cross-product automaton,
 we cannot go from columns $n-1$ and $n$, 
 as well as from rows $m-1$ and $m$,
 back to any state $(i,j)$ with $i<m-1$ or $j<n-1$.

 If state 1 of ${\mathcal A}$ goes by both inputs $a$ and $b$
 to a state in $\{m-1,m\}$, then no row $i$ with $i<m-1$ can be reached.
 Therefore, if the bound is to be met,
 at least one input, say $a$,
 takes state 1 to a state $i$ with $i<m-1$.
 Suppose also that $b$ takes 1 to a state in $\{m-1,m\}$. 
 A~similar condition applies to $L$.
  Suppose that input $b$ takes state 1 of ${\mathcal B}$ to a state $j$ with $j<n-1$,
 and $a$, to a state in $\{n-1,n\}$.
 Then no state $(i,j)$ with $i<m-1$ or $j<n-1$ can be reached.
 It follows that, without loss of generality,
 each automaton must take its initial state by $a$
 to a state that is neither accepting nor empty;
 for convenience, let this state be 2 in both automata.
 Then no other transition by $a$ may go to state 2 in the two automata,
 otherwise they would not be suffix-free.

 It follows that  in the cross-product automaton,
 all the states in row 2 and column~2, except for $(2,2)$,
 must be reached from some states by input $b$.
 Thus, if all the states are reachable,
 there must be an incoming transition by $b$ to each state $i$ with $i\ge 2$ in ${\mathcal A}$
 and $j$ with $j\ge 2$ in ${\mathcal B}$.
 In particular, if  state $(m-1,2)$ or $(2,n-1)$  is reachable,
 then some state, say $p_1$ (respectively $q_1$)
 different from $m-1$ (respectively  $n-1$) 
 must go to state $m-1$ (respectively $n-1$) in ${\mathcal A}$ (respectively ${\mathcal B}$).
 Now since $p_1$ goes to $m-1$ by $b$, it cannot go anywhere else by $b$.
 Thus there must be some other state $p_2$ not in $\{p_1,m-1,m\}$
 that goes to $p_1$ by $b$.
 Then there must be a state $p_3$ not in $\{p_2,p_1,m-1,m\}$ that goes to $p_2$ by $b$,
  and so on.
  Eventually, we have
  $$
  p_{m-3} \stackrel{b}{\rightarrow} p_{m-4} \stackrel{b}{\rightarrow} \cdots 
          \stackrel{b}{\rightarrow} p_3 
          \stackrel{b}{\rightarrow} p_2 
          \stackrel{b}{\rightarrow} p_1 
          \stackrel{b}{\rightarrow} m-1
          \stackrel{b}{\rightarrow} m,
 $$
 where all the states are pairwise distinct,
 and no state, except possibly state 1, goes by $b$ to state $p_{m-3}$.
 
 First assume state 1 goes to state $p_{m-3}$ by $b$.
 If $p_{m-3}=2$, then state 1 goes to state 2 by $a$ and by $b$.
 This means that there is no other transition to state 2,
 and so row 2 is not reachable in the cross-product automaton.
 If $p_{m-3}>2$ and 1 goes to  $p_{m-3}$ by $b$,
 then no other state goes to $p_{m-3}$ by $b$ because of suffix-freeness,
 and so  row $p_{m-3}$ may only be reached by $a$'s.
 However, in such a case state $(p_{m-3},2)$ is unreachable,
 since it is in row $p_{m-3}$ that can be reached only by $a$'s
 and at the same time in column 2 that can be reached only by $b$'s.

 Now assume that there is no transition by $b$ going to state $p_{m-3}$.
 If $p_{m-3}\ge3$, then $(p_{m-3},2)$  is unreachable.
 If $p_{m-3}=2$, then the whole row  2, except for $(2,2)$ is unreachable.
 The same considerations hold for automaton ${\mathcal B}$.
 This gives the desired upper bound $mn-(m+n)-2$. 
\end{proof}

We finally consider union and symmetric difference
of binary factor-free languages,
and give  upper bounds. 
We conjecture that the bounds are tight.

\begin{proposition} [\bf Union, Symmetric Difference: 
Binary Factor-Free Languages]
\label{prop8}
 Let $m,n \ge 6$. There exist binary factor-free languages $K$ and $L$
 with $\kappa(K) = m$ and $\kappa(L) = n$ such that
 $\kappa(K \cup L), \kappa(K \oplus L) 
 \ge mn-(m+n)-\min\{m-3,n-3\}$.
 We conjecture that this is largest bound for binary factor-free languages.
\end{proposition}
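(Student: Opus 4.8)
The plan is to exhibit explicit binary factor-free witnesses $K$ and $L$ with $\kappa(K)=m$, $\kappa(L)=n$, described by their quotient automata much as in Proposition~\ref{prop5} (Figure~\ref{fig:binary_inter_diff}), and then to analyze the cross-product automaton for union and for symmetric difference. I would take $L$ to be the ``linear'' language whose words carry exactly $n-2$ copies of $a$ and begin and end with $a$: its automaton has states $1,\dots,n$, with $1\xrightarrow{a}2$, an $a$-increment $j\xrightarrow{a}j+1$ and a $b$-self-loop $j\xrightarrow{b}j$ on each state $j\in\{2,\dots,n-2\}$, accepting state $n-1$ (the quotient $\eps$), and empty state $n$. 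For $K$ I would instead drive the progress by $b$: states $1,\dots,m$ with $1\xrightarrow{a}2$, a $b$-cycle $2\xrightarrow{b}3\xrightarrow{b}\cdots\xrightarrow{b}(m-2)\xrightarrow{b}2$ of length $m-3$, an $a$-self-loop at state $3$, the accepting transition $(m-2)\xrightarrow{a}(m-1)$, and all remaining $a$-transitions (and $1\xrightarrow{b}$) leading to the empty state $m$. Both languages start with $a$, so the product leaves $(1,1)$ for $(2,2)$, which is what makes rich reachability possible.

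The first task is factor-freeness. For $L$ this is immediate: every word of $L$ has exactly $n-2$ occurrences of $a$ and begins and ends with $a$, so any proper factor either has fewer $a$'s (hence is not in $L$) or contains all $n-2$ of them and therefore coincides with the whole word. For $K$ the argument is the delicate one: words of $K$ may carry different numbers of $a$'s because of the self-loop at state $3$, so I would argue directly that reading an $a$ anywhere except at state $3$ or at the terminal transition $(m-2)\xrightarrow{a}(m-1)$ destroys the computation, pin down the admissible $a$-positions inside each word, and show that no proper factor can reach the accepting state. By Proposition~\ref{prop:pf} together with a deterministic reverse automaton (as in Proposition~\ref{prop6}) one first obtains bifix-freeness; strengthening this to factor-freeness of $K$ is the extra combinatorial step.

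Next I would count reachable and pairwise distinguishable states in the cross-product. The essential region is $\{(i,j)\mid 2\le i\le m-2,\ 2\le j\le n-2\}$ of size $(m-3)(n-3)$: parking $K$ at its $a$-loop state $3$ lets us advance $L$ to column $j$ by $a$'s, after which $b$'s cycle $K$ to row $i$ while $L$ self-loops at $j$. Distinguishing words are built in the style of Proposition~\ref{prop5}: words in $b^{\ast}a$ probe the row (the position of $K$ in its cycle, which controls when $K$ reaches $m-2$ and accepts), while words in $b^{\ast}a^{\ast}$ probe the column (the position of $L$, which accepts exactly when it reaches $n-1$). To this core one adds the initial state $(1,1)$, the reachable part of $K$'s accepting row $m-1$ and of $L$'s accepting column $n-1$, and the single empty state; and, exactly as for union in Proposition~\ref{prop6}, the three corner states $(m-1,n-1)$, $(m-1,n)$, $(m,n-1)$ are equivalent. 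For symmetric difference the doubly-accepting corner $(m-1,n-1)$ is instead empty while $(m-1,n)$ and $(m,n-1)$ remain equivalent, so the final count is unchanged.

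The main obstacle is the exact bookkeeping that yields the $-\min\{m-3,n-3\}$ correction. The two structures can be driven independently only while both stay in their ``bulk'': once $L$ is pushed into its accepting column $n-1$, reading a $b$ sends $L$ to the empty state, so $K$ can no longer be cycled without destroying $L$. Consequently a whole diagonal of states next to the accepting boundary is unreachable (or collapses under equivalence), and the number of such states is controlled by the shorter of the cycle length $m-3$ and the chain length $n-3$, which produces $\min\{m-3,n-3\}$. Making this precise — identifying exactly the unreachable and equivalent states near the two accepting boundaries, and verifying that the lower-order boundary contributions combine with the core $(m-3)(n-3)$ to give $mn-(m+n)-\min\{m-3,n-3\}$ — together with the factor-freeness of the cyclic witness $K$, is where the real work lies. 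Since only the lower bound is asserted, the matching upper bound being left as a conjecture, no upper-bound argument is required.
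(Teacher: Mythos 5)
Your overall plan (build a cross-product automaton, count reachable states, merge the three corner states that accept only $\eps$, and observe that symmetric difference behaves the same) is the right one, but the witnesses you propose do not achieve the stated bound, and the two steps you yourself defer as ``the real work'' (factor-freeness of the cyclic $K$, and the exact boundary bookkeeping) are precisely where this shows up. The quantitative obstruction is that in your automata the letter $a$ is simultaneously the letter that advances $L$ into its accepting state $n-1$ and the letter that kills every state of $K$ except $3$ and $m-2$. Since column $n-1$ of the product can only be entered by reading $a$ from column $n-2$, the only reachable states in that column are $(3,n-1)$, $(m-1,n-1)$ and $(m,n-1)$. A full count of your product gives $mn-2m-n+4$ reachable states (namely $(1,1)$; all of rows $2,\dots,m-2$ in columns $2,\dots,n-2$ and in column $n$; columns $3,\dots,n$ of rows $m-1$ and $m$; and the three states of column $n-1$ just listed), and after merging the three $\{\eps\}$-states this leaves at most $mn-2m-n+2=mn-(m+n)-\min\{m-3,n-3\}-1$ quotients when $m\le n$ (and strictly fewer than the bound when $m>n$). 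Concretely, for $m=n=6$ your construction yields $20$ distinct quotients while the proposition, confirmed by the paper's exhaustive computation, requires $21$. So the construction is one state short no matter how carefully the distinguishability argument is completed; your intuition that the loss near the accepting boundary is ``controlled by'' $\min\{m-3,n-3\}$ is an off-by-one, which is fatal for a lower bound. In addition, the factor-freeness of your cyclic $K$ is asserted but never established, and it is genuinely delicate since its words contain variable numbers of $a$'s.

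The paper's proof takes a cleaner route: both witnesses are acyclic, with $K=a(b^*a)^{m-3}$ advancing on $a$ and idling (self-looping) on $b$, and $L=(a\cup b)(ba^*)^{n-4}b$ advancing on $b$ and idling on $a$. Factor-freeness is then a one-line letter count (every word of $K$ has exactly $m-2$ occurrences of $a$ while every proper factor has fewer, and similarly for $L$ with $b$'s), and, crucially, the letter $b$ that pushes $L$ into its accepting column is one under which every live state of $K$ survives, so the entire set $\{(i,n-1)\mid 2\le i\le m-2\}$ is reachable. The loss of $\min\{m-3,n-3\}$ states occurs instead in column $2$, which, apart from $(2,2)$ and $(m,2)$, can never be re-entered once left. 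The paper also gets symmetric difference for free by noting that its $K$ and $L$ are disjoint (their words end in different letters), so $K\oplus L=K\cup L$. To salvage your approach you would need to redesign $K$ so that the letter entering $L$'s accepting state preserves all of $K$'s live states, which essentially forces you back to acyclic witnesses of the paper's type.
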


\begin{proof}
 Consider binary languages
   $K = a(b^*a)^{m-3}$, and
   $L = (a\cup b)(ba^*)^{n-4}b$.
 Quotient automata for $K$ and $L$ are shown in Figure~\ref{fig:yyy}. 

 \begin{figure}[b]
 \centering
  \input{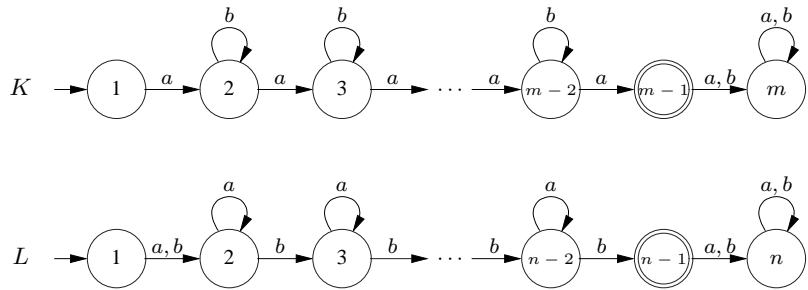}
 \caption{Binary factor-free languages $K$ and $L$ meeting
   quotient complexity $mn - (m+n) - (m-3)$ 
   for union and symmetric difference.}
 \label{fig:yyy}
 \end{figure}

 To show that the languages are factor free,
 observe that every word $w$ in $K$ has exactly $m-2$  $a$'s, 
 while every proper factor of $w$ has less than $m-2$  $a$'s.
 Thus $K$ is factor-free. 
 For $L$, every word $w$ in $L$ either has $a$ as a prefix and has $n-3$  $b$'s,
 or has  $n-2$  $b$'s.
 However, every proper factor of $w$ either has $a$ as a prefix and has $n-4$
  $b$'s, or has $n-3$  $b$'s. Thus $L$ is also factor-free.

\begin{figure}[t]
\begin{center}
\input{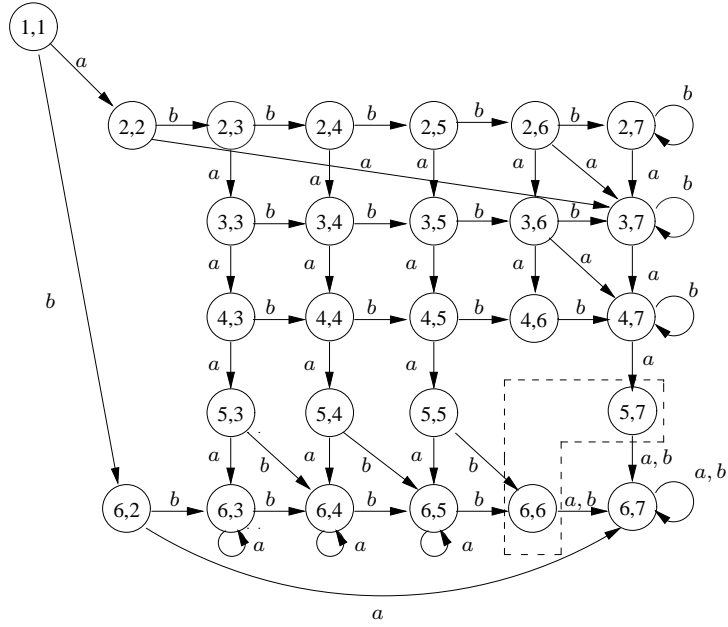}
\end{center}
\caption{Cross-product automaton for automata from Figure~\ref{fig:yyy}; 
 $m = 6$, $n = 7$.}
\label{fig:binfactorfreeunion}
\end{figure}

Construct the cross-product automaton for language $K\cup L$;
 see Figure~\ref{fig:binfactorfreeunion}.
 
 Consider the following family $\mathcal{R}$
 of $mn-(m+n)-(m-3)$ states: 
 \begin{eqnarray*}
   \mathcal{R} = \{(1,1),(2,2)\}  &\cup& \{(i,j)\mid 2\le i\le m-2, 3\le j\le n\} \cup\\
                                  & & \{(m-1,j)\mid 3\le j\le n-2\} \cup\\
                                  & & \{(m,j)\mid 2\le j\le n\},
 \end{eqnarray*}
 and let  us show that all states in $\mathcal{R}$
 are reachable and pairwise distinguishable.  
 The initial state $(1,1)$ goes to state $(2,2)$ by $a$,
 then to state $(2,3)$ by $b$,
 and then to state $(i,j)$ with $2\le i\le m-2$ and $3\le j\le n$ by $a^{i-2}b^{j-3}$.
 Each state $(m-2,j)$ with $3\le j\le n-2$ goes to state $(m-1,j)$ by $a$.
 State $(m,j)$ with $2\le j\le n$ is reached from the initial state $(1,1)$ by $b^{j-1}$.
 Thus all the states in $\mathcal{R}$ are reachable.

 For distinguishability, notice that 
 $a^{m-1}$ is accepted only from state $(1,1)$.
 Among the other states,
 two rejecting states in two distinct rows
 go to two distinc states in column $n$ by $b^n$,
 and the two states in column $n$ are  distinguished by a word in $a^*$.
 Two rejecting states in the same row $i$ go  by a word in $b^*$
 to states $(i,n-1)$ and $(i,n)$ that are distinguished by $\eps$.
 Two distinct accepting states in family $\mathcal{R}$ go by $b$
 either to two states, one of which is accepting and the other rejecting,
 or to two distinct rejecting, and so distinguishable, states.

 The proof for symmetric difference is exactly the same;
 notice that the languages are disjoint,
 and so their symmetric difference is the same as their union.

  Since union is a commutative operation,
 we may assume $m\le n$,
 and then the lower bound for  binary factor-free languages
 is $mn-(m+n)-(m-3)$.
 We did some computations
 by  enumerating all the binary factor-free
 automata in the case of $m,n \le 6$.
 The following table contains all the enumerated results:

\bigskip
\centerline{
 \begin{tabular}{l|lll}
  $m/n$    &$4$    & 5  &  6\\
 \hline
 $4$  & 7   &   &  \\
 $5$  & 10  &   13 & \\
 $6$  & 13  &  17  & 21
 \end{tabular}}
\bigskip 

\noindent
 All the entries, except for 21 ($m = n = 6$),
 are the same as for binary bifix-free languages.
 In case $m = n = 6$, 
 the complexity of union of binary factor-free languages is 21, that is $mn - (m+n) - (m-3)$.
 Thus it is the same as our lower bound.
 This is confirmed by the partial
 enumeration for $m = 6$ and $n = 7$,
 where we used
 a partial list of binary factor-free automata for $n = 7$.
 
 After quite a few unsuccessful attempts
 to get a larger value by the union of binary factor-free languages,
 we conjecture that $mn - (m+n) - (m-3)$
 is an upper bound if $m\le n$.
\end{proof}

We now turn our attention to subword-free languages.
The next theorem gives tight bounds for all four boolean operations
and shows that the bounds cannot be met using a fixed alphabet.

\begin{theorem}[\bf Boolean Operations: Subword-Free Languages]\label{thm:subword}
 Let $K$ and $L$  be subword-free languages
 over an alphabet $\Sigma$
 with $\kappa(K)=m$ and $\kappa(L)=n$,  
 where $m,n\ge4$.
Then\\
1. $\kappa(K\cup L), \kappa(K\oplus L)\le mn-(m+n)$,
                 and the bound is tight if $|\Sigma|\ge m+n-3$; \\
2. $\kappa(K\cap L)\le mn-3(m+n-4)$,
                 and the bound is tight if $|\Sigma|\ge m+n-7$;\\
3. $\kappa(K\setminus L)\le mn-(2m+3n-9)$,
                 and the bound is tight if $|\Sigma|\ge m+n-6$.\\
Moreover, the bounds cannot be met for smaller alphabets.
\end{theorem}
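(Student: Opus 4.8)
The plan is to split the theorem into the upper bounds, the alphabet lower bounds, and the matching witnesses, spending almost all of the effort on the lower bounds. The upper bounds come for free: every subword-free language is factor-free, so any subword-free $K,L$ are in particular factor-free and the three bounds $mn-3(m+n-4)$, $mn-(2m+3n-9)$, and $mn-(m+n)$ are exactly those established in Theorem~\ref{thm:bifix,factor}. The only new content is therefore tightness together with the impossibility of meeting each bound over a smaller alphabet.

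The lower bounds on $|\Sigma|$ are where Lemma~\ref{lem:letter} does the work. Its structural meaning is that in the minimal dfa of a subword-free language the state $2$ (in the order of Proposition~\ref{prop:finite}) has incoming transitions only from the initial state $1$: a path of length $\ge 2$ from $1$ to $2$ would reach $L_2$ by a word of length $\ge 2$, which Lemma~\ref{lem:letter} forbids, while by Proposition~\ref{prop:sf} state $1$ is never re-entered. Transporting this to the cross-product automaton $\cA\times\cB$ for $K\circ L$, any reachable state whose $\cB$-component equals $2$ (``column $2$'') can be entered only from $(1,1)$, hence is reached by a single letter $c$ with $\delta_{\cB}(1,c)=2$; symmetrically for ``row $2$''. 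Because a letter determines the whole cross-product state, distinct reachable states of column $2$ (resp.\ row $2$) require distinct letters, and the two letter-families can share only the letter that reaches $(2,2)$. Consequently $|\Sigma|$ is at least the number of distinguishable reachable states lying in row $2$ or column $2$. Reading off from the state sets already identified in Theorem~\ref{thm:bifix,factor} which of them fall in row $2$ or column $2$ gives: for union and symmetric difference the full rows $(2,2),\dots,(2,n)$ and columns $(2,2),\dots,(m,2)$, hence $(n-1)+(m-1)-1=m+n-3$; for intersection only $(2,j)$ with $2\le j\le n-2$ and $(i,2)$ with $2\le i\le m-2$ survive as nonempty, giving $m+n-7$; and for difference the same column count $m-3$ but one extra row-$2$ state from the class $(2,n-1)\sim(2,n)$, giving $(m-3)+(n-2)-1=m+n-6$.

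For tightness I would construct, for each operation, explicit subword-free $K$ and $L$ with $\kappa(K)=m$, $\kappa(L)=n$ over an alphabet of exactly the stated size. Ordering the states as in Proposition~\ref{prop:finite}, subword-freeness is checked from the order criterion (no word of the language is a subsequence of another), and the letters are assigned so that the single-letter transitions out of $(1,1)$ populate precisely the required row-$2$ and column-$2$ states, after which longer words fill the interior; the reachability and distinguishability counts then reproduce the numbers of Theorem~\ref{thm:bifix,factor}. Together with the previous paragraph, these witnesses yield both ``tight if $|\Sigma|\ge\cdots$'' and the impossibility over smaller alphabets.

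The main obstacle is the reachability forcing hidden in the lower-bound count: I must show that whenever the bound for $K\circ L$ is met, \emph{every} required row-$2$ and column-$2$ state is actually reachable, so that the number of distinct single-letter transitions out of $(1,1)$ is genuinely forced. Since all reachable states other than $(1,1)$ lie in the grid $\{2,\dots,m\}\times\{2,\dots,n\}$, meeting the bound limits the total number of grid states that may be unreachable or identified; the delicate point is to prove that this small deficit cannot be charged to row $2$ or column $2$, confining all permissible losses to the interior corner where the identifications of Theorem~\ref{thm:bifix,factor} already occur. The natural route is to argue that a gap in row $2$ or column $2$ removes the only entry into enough further states to push the count below the bound, but this needs care precisely because an interior state may also be reached by a direct jump from $(1,1)$; the resolution must therefore use the tightness of the loss budget and the particular structure of each operation, and I expect this step, rather than any calculation, to be the crux of the proof.
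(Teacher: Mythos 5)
Your treatment of the upper bounds (inherited from Theorem~\ref{thm:bifix,factor} since subword-free implies factor-free) and your mechanism for the alphabet lower bounds are both sound and match the paper: Lemma~\ref{lem:letter} forces every cross-product state in row~2 or column~2 to be entered only by a single letter from $(1,1)$, distinct such states need distinct letters, and your counts $m+n-3$, $m+n-7$, $m+n-6$ of the surviving row-2/column-2 classes for the three operations are exactly right. The step you flag as the crux --- that meeting the bound forces every required row-2/column-2 pair to be realized and to yield a distinct quotient --- is real but is resolved much more directly than by your proposed route of tracking ``entries into further states.'' The point is pure accounting: the upper-bound derivation loses exactly $m+n-2$ pairs to unique reachability and then a fixed further amount to the forced identifications among pairs whose components lie in $\{\eps,\emp\}$ (e.g.\ the three pairs giving $\eps$ for union, the empty rows/columns for intersection and difference); none of those pairs lie in row~2 or column~2 (as $m,n\ge4$), and they already exhaust the entire loss budget, so if the bound is met no row-2/column-2 pair can be unrealized or merged. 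You should state this; as written, the lower-bound half of your argument is conditional on an unproved claim.

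The genuine gap is tightness. The theorem asserts the bounds \emph{are met} when $|\Sigma|$ is large enough, and your proposal never exhibits witnesses: ``I would construct explicit subword-free $K$ and $L$\dots the letters are assigned so that\dots'' is a plan, not a proof. This is where most of the paper's work lies: it defines $K$ and $L$ by explicit quotient equations over $\Sigma=\{a,b,c\}\cup\{d_i\}\cup\{e_j\}$ of size $m+n-3$, verifies subword-freeness by a case analysis on how many letters of $\{d_3,\dots,d_{m-1}\}$ a word of $K$ contains (this is not automatic --- a careless assignment of letters easily produces a word of the language that is a proper subword of another), and then proves reachability and pairwise distinguishability in the cross-product separately for union, symmetric difference, intersection, and difference, observing along the way which letters can be discarded to shrink the alphabet to $m+n-7$ and $m+n-6$ for the latter two operations. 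Without some such construction and verification, the ``tight if $|\Sigma|\ge\cdots$'' halves of all three claims remain unproved.
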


\begin{proof}
 Since subword-free languages are bifix-free,
 all the upper bounds apply.
 To prove tightness, let 
 $\Sigma=\{a,b,c\}\cup\{d_i\mid 3\le i \le m-1\}\cup\{e_j\mid 3\le j \le n-1\} $.
 Consider the languages $K$ and $L$ defined by the following quotient equations:

 \begin{tabular}{lcl}
 $K_1$ &=& $(a\cup b \cup e_3\cup\cdots\cup e_{n-1})K_2
 \cup \bigcup_{i=3}^{m-1}d_iK_i$,\\
 $K_i$ &=& $ aK_{i+1}\cup d_{i+1}K_{m-1}$ 
                               \hfill            $i=2, 3,\ldots, m-3$,\\
 $K_{m-2}$&=&$(a\cup b\cup d_{m-1}\cup e_3\cup e_4\cup\cdots\cup e_{n-1})K_{m-1}$,\\
 $K_{m-1}$&=&$\eps$,\\
 $K_{m}$&=&$\emptyset$, \\
 &&\\
 $L_1 $&=& $(a\cup c\cup  d_3\cup \cdots\cup d_{m-1})L_2
 \cup \bigcup_{j=3}^{n-1}e_jL_j$,\\
 $L_j$ &=&  $aL_{j+1}\cup e_{j+1}L_{n-1}$ 
                   \qquad \qquad \qquad \qquad \quad $j=2, 3,\ldots, n-3$,\\
 $L_{n-2}$&=&$(a\cup c\cup e_{n-1}\cup d_3\cup d_4\cup\cdots\cup d_{m-1})L_{n-1}$,\\
 $L_{n-1}$&=&$\eps$,\\
 $L_{n}$&=&$\emptyset$. 
 \end{tabular}

 \bigskip
 \noindent
 Figure~\ref{fig:booleansub}
 shows the quotient automata for languages $K$ and $L$ if $m=5$ and $n=6$.
 All the omited transitions go to the empty states $m$ and $n$.
\begin{figure}[t]
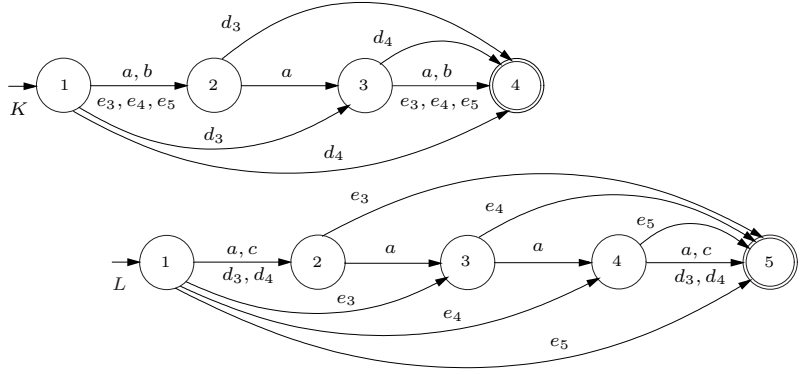

 \begin{center}
 \input booleansub.eepic
 \end{center}
 \caption{Subword-free witness languages for boolean operations; $m=5$, $n=6$.}
 \label{fig:booleansub}
 \end{figure}
 
 Let us show that languages 
 $K$ and $L$ are subword-free. 
 For this purpose, let
 $$\Gamma=\{a,b,e_3,e_4,\ldots, e_{n-1}\}, \text{ and }
 \Delta=\{d_3,d_4,\ldots, d_{m-1}\}.$$
 Notice that no word in $\Gamma^*$ of length less than $m-2$ is in $K$.
 Now let $w$ be a word in language $K$. 
 Then word $w$ either contains no letter from $\Delta$, 
 or  contains at most two such letters.
 If $w$ contains no letter from $\Delta$,
 then $w$ is a word in $\Gamma^*$
 of length $m-2$, and so no its proper subword is in $K$.
 If $w$ contains exactly one letter from $\Delta$, then
   either $w=ud_i$ for some word $u$  in $\Gamma^*$ of length $i-2$,
   or $w=d_iv$ for some word $v$  in $\Gamma^*$ of length $m-1-i$.
 In both cases, no proper subword of $w$ is in language $K$.
 Finally, if $w$ contains two letters from $\Delta$,
 then $w=d_ia^{k}d_{i+k+1}$ where $k\ge 0$ and $3\le i <i+k+1\le m-2$.
 No proper subword of such a word is in language $K$.
 This means that language $K$ is subword-free.
 The proof for language $L$ is similar.

 Figure~\ref{fig:boolean_sbw_prod} depicts
 the cross-product automaton of the dfa's
 for languages $K$ and $L$ defined in Figure~\ref{fig:booleansub},
 where we show only the transitions necessary
 to prove reachability and those caused by $b$ and $c$.
 In the cross-product automaton,
 states in the first row and the first column, except for the initial state   $(1,1)$,
 are unreachable. Now consider the remaining states.
 All the states in the second row and the second column
 are reached from  $(1,1)$ by symbols in $\Sigma$.
 Each other state  is reached from a state
 in the second row or second column by a word in $a^*$.

 \begin{figure}[t]
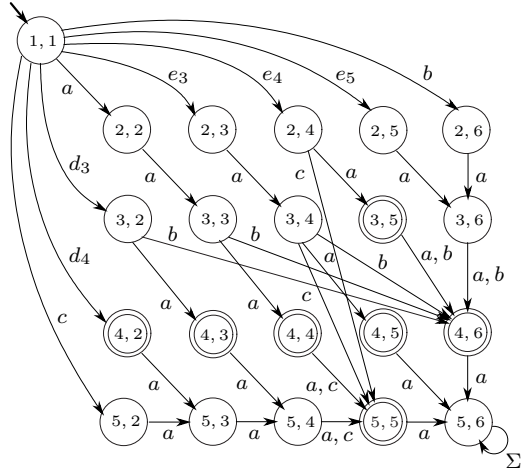

 \begin{center}
 \input KunionL.eepic
 \end{center}
 \caption{Reachability in the cross-product automaton
          for the union of languages from Figure~\ref{fig:booleansub}
          and transitions by $b$  and $c$.}
 \label{fig:boolean_sbw_prod}
 \end{figure}

 For union,
 all the states in row $m-1$ and in column $n-1$
 are accepting, and the three states 
 $(m,n-1)$, $(m-1,n-1)$, and $(m-1,n)$ 
 accept only $\eps$, and so
are equivalent.
 These three states are distinguishable from all other accepting states,
 since each of the other accepting states accepts at least one non-empty word. 
 Now let $(i,j)$ and $(k,\ell)$ be two distinct states 
 other than the three states accepting only word $\eps$.
 First assume that $i<k$.
 If $i=m-1$, then 
 state $(i,j)$ is accepting while  state $(k,\ell)$ is rejecting.
 If $i\le m-2$, then $a^{m-2-i}b$ is accepted from  state  $(i,j)$,
 but not from  state  $(k,\ell)$.
 Symmetrically, if $j<\ell$,
 then either $\eps$ or $a^{n-2-j}c$ distinguishes the two states.
 Therefore all the  $mn-(m+n)$ 
 states are pairwise distinguishable.

 For symmetric difference,  $(m-1,n-1)$ is empty;
 the rest of the proof is the same as for union.

 For intersection,
 the only accepting state is $(m-1,n-1)$,
 and all the rejecting states in the last two rows and last two columns are empty.
 Next, the word $a$ is accepted only from state $(m-2,n-2)$,
 the word $d_i$ ($3\le i \le n-2$)
 is accepted only from state $(i-1,n-2)$,
 while the word $e_j$ ($3\le i \le m-2$),
 only from state $(m-2,j-1)$.
 This means that for each state $(i,j)$,
 there exists a word in 
 $a^*(a\cup d_3 \cup \cdots \cup d_{m-2}\cup e_3 \cup \cdots \cup e_{n-2})$
 that is accepted only from $(i,j)$.
 So we get $mn-3(m+n-4)$ pairwise distinguishable states.
 Notice, that here we do not use transitions by symbols $b,c,d_{m-1},e_{n-1}$,
 and so we can simply omit these symbols
 to get witness languages over an alphabet of size $m+n-7$.

 For difference,
 all the states in row $m-1$, except for state $(m-1,n-1)$,
 are accepting and accept $\eps$.
 All the states in the last row, as well as state $(m-1,n-1)$,
 are empty, and states $(i,n-1)$ and $(i,n)$ with $2\le i \le m-2$
 are equivalent.
 States in different rows (up to row $m-1$)
 are distinguished by a word in $a^*b$.
 States in row $m-2$ are distinguished 
 by a word in $a\cup e_3\cup e_4 \cup \cdots \cup e_{n-2}$ 
 because $a$ distinguishes states $(m-2,n-2)$ and $(m-2,n-1)$,
 and if $2\le j<\ell\le n-1$ and $j\neq n-2$, 
 then word $e_{j+1}$ is not accepted from $(m-2,j)$ 
 but is accepted from $(m-2,\ell)$.
 Next, states $(i,n-2)$  and $(i,n-1)$ with $2\le i \le m-3$
 are distinguished by $d_{i+1}$.
 Finally, if two distinct states are in the same row,
 then there is a word in $a^*$,
 by which the two states either go to two distinct states in row $m-2$, or
 to two states $(i,n-2)$ and $(i,n-1)$ with $2\le i \le m-3$. 
 In both cases the resulting states are distinguishable,
 which proves the distinguishability of $mn-(2m+3n-9)$ states.
 Notice that now we do not use transitions by $c, d_{m-1}, e_{n-1}$,
 and so the bound is met for an alphabet of size $m+n-6$.

 We now show that the upper bounds cannot be met using smaller alphabets.
 Let the quotients of $K$ and $L$  be 
 $K=K_1,K_2,\dots,K_{m-2},K_{m-1}=\eps, K_{m}=\emp,$  and
 $L=L_\eps=L_{1},L_{2},\dots,L_{n-2},L_{n-1}=\eps, L_{n}=\emp,$
 ordered  as in Proposition~\ref{prop:finite}. 
 By Lemma~\ref{lem:letter},
 all the quotients of the form $K_2\cup L_i$ or $K_j\cup L_2$
 must  be reached by letters if the bound is to hold,
 and this is impossible if the size of the alphabet
 is  smaller than the number of such quotients. 
\end{proof}

\section{Product and Star}
\label{***prod}

The complexity of product of prefix-free languages is $m+n-2$~\cite{HSW09}.
For suffix-free languages,
the complexity  is $(m-1)2^{n-1}+1$~\cite{HaSa09}. 
Since bifix-free languages are prefix-free, 
and the witness prefix-free languages $a^{m-2}$ and $a^{n-2}$
are also subword-free, and we have the following result.

\begin{theorem}[\bf Product]\label{thm: prod1}
 If $K$ and $L$ are bifix-free with $\kappa(K)=m$ and $\kappa(L)=n$,
 where $m,n\ge2$,
 then $\kappa(KL)\leq m+n-2$.
 Furthermore, there are unary subword-free languages that meet this bound.
\end{theorem}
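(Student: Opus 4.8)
The plan is to obtain the upper bound as an immediate consequence of the prefix-free case, and to verify tightness with explicit unary singleton witnesses.

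For the upper bound, I would first observe that, by definition, a bifix-free language is in particular prefix-free; hence both $K$ and $L$ are prefix-free. The complexity of the product of prefix-free languages is $m+n-2$, as recalled above from~\cite{HSW09}. Therefore $\kappa(KL)\le m+n-2$ with no additional argument; notably, the suffix-free half of the bifix-free hypothesis plays no role in establishing the bound.

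For tightness, I would take the unary languages $K=a^{m-2}$ and $L=a^{n-2}$. Every singleton language is subword-free, since if $u$ and $v$ lie in a singleton and $u$ is a subword of $v$ then $u=v$ trivially; hence $K$ and $L$ are subword-free, and a fortiori bifix-free, so they are legitimate witnesses for the bifix-free upper bound. To confirm their complexities, I would list the quotients of $a^{k}$ explicitly: by the words $\eps,a,\dots,a^{k}$ one obtains $a^{k},a^{k-1},\dots,a,\eps$, and every longer word yields $\emptyset$; this gives $k+2$ distinct quotients, so $\kappa(a^{k})=k+2$. With $k=m-2$ this gives $\kappa(K)=m$, and with $k=n-2$ it gives $\kappa(L)=n$, as required.

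Finally, since $KL=a^{m-2}a^{n-2}=a^{m+n-4}$, the same count yields $\kappa(KL)=(m+n-4)+2=m+n-2$, matching the upper bound and establishing tightness. I do not anticipate any genuine obstacle: the upper bound is inherited directly from the cited prefix-free result, and tightness reduces to a routine quotient count for a single unary word. The only point demanding a little care is getting the off-by-one count right for the singleton witnesses, which the explicit enumeration above settles (and which also covers the degenerate cases, e.g.\ $m=2$ where $K=\eps$).
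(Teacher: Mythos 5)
Your proposal is correct and matches the paper's own justification: the paper likewise derives the upper bound from the prefix-free product bound of~\cite{HSW09} (since bifix-free implies prefix-free) and observes that the prefix-free witnesses $a^{m-2}$ and $a^{n-2}$ are also subword-free. Your explicit quotient count $\kappa(a^k)=k+2$ simply spells out details the paper leaves implicit.
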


The complexity of star is $n$ for prefix-free languages~\cite{HSW09}, 
and $2^{n-2}+1$ for suffix-free languages  \cite{HaSa09}. 
We now extend these results to bifix-, factor-, and subword-free languages.
The quotient of $L^*$ by $\eps$ is $L^*=\eps\cup LL^*$, 
and the  following formula holds for a quotient of $L^*$ 
by a non-empty word $w$~\cite{Brz10}:
$$
    (L^*)_w=(L_w\cup \bigcup_{\substack{w=uv\\ \;\;u,v\in\Sig^+}}
            (L^*)_u ^\eps L_v)L^*.
$$

\begin{theorem}[\bf Star]\label{thm: star1}
 If $L$ is bifix-free with $\kappa(L)=n$,
 where $n\ge3$,
 then $\kappa(L^*)\leq n-1$. 
 Furthermore, there are binary subword-free languages that meet this bound.
\end{theorem}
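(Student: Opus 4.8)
The plan is to compute the quotients of $L^*$ directly from the displayed star formula and show that bifix-freeness forces at most $n-1$ distinct ones to occur. Write the quotients of $L$ as the initial quotient $L$, the unique accepting quotient $\eps$ (Proposition~\ref{prop:pf}), the empty quotient $\emptyset$, and the remaining $n-3$ \emph{intermediate} quotients $P_1,\dots,P_{n-3}$ (nonempty, distinct from $L$ and $\eps$); these are all distinct because $n\ge3$. By the formula, $(L^*)_\eps=L^*$, and for every nonempty $w$ we have $(L^*)_w=R_wL^*$ with $R_w=\bigcup\{L_v: w=uv,\ u\in L^*,\ v\in\Sigma^+\}$. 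I would then prove that every such $R_wL^*$ equals $L^*$, $\emptyset$, or some $P_iL^*$, giving at most $1+1+(n-3)=n-1$ quotients.

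The key is to control the union $R_w$ using the fact that $L$, being prefix-free, is a prefix code, so $L^*$ has unique factorization and is ``left-cancellative'': if $u\in L^*$ is a prefix of a word in $L^*$, the remainder again lies in $L^*$. First I would split on whether $w\in L^+$. If $w\in L^+$, then every prefix $u\in L^*$ of $w$ leaves a remainder $v\in L^*$; for $v\neq\eps$ this gives $v\in L^+$, and a word in $L^+$ is either a single codeword (so $L_v=\eps$) or begins with a proper codeword prefix (so $L_v=\emptyset$ by prefix-freeness). Taking $u$ to be all but the last codeword shows $\eps\in R_w$, whence $R_w=\{\eps\}$ and $R_wL^*=L^*$. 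If $w\notin L^+$, then no term can be $\eps$ (that would place $w$ in $L^+$), and the $L^*$-prefixes of $w$ form a chain $u_0=\eps<u_1<\cdots<u_k$ with consecutive differences in $L$; each remainder $v_i=u_i^{-1}w$ with $i<k$ begins with a codeword but is strictly longer than it, so $L_{v_i}=\emptyset$ by prefix-freeness. Hence $R_w=L_{v_k}$, a single quotient, and since $v_k$ is nonempty with no codeword prefix, $L_{v_k}$ is either intermediate or $\emptyset$.

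The main obstacle, and the one spot where suffix-freeness is indispensable, is to rule out the last possibility that $L_{v_k}$ equals the \emph{initial} quotient $L$: that would create the genuinely new quotient $LL^*=L^+$ and yield only the prefix-free bound $n$. Here I invoke Proposition~\ref{prop:sf}: since $L$ is suffix-free, the quotient $L=L_\eps$ is uniquely reachable, so $L_{v_k}=L$ would force $v_k=\eps$, contradicting $v_k\in\Sigma^+$. Therefore $L_{v_k}\in\{P_1,\dots,P_{n-3},\emptyset\}$, every quotient of $L^*$ is one of $L^*,\emptyset,P_1L^*,\dots,P_{n-3}L^*$, and $\kappa(L^*)\le n-1$. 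I expect the delicate points to be the unique-factorization/chain argument and the verification that $R_w$ never mixes $\eps$ with an intermediate quotient; this non-mixing, together with the exclusion of $L^+$, is exactly what drops the count from $n$ to $n-1$.

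For tightness I would take the binary language $L=ab^{n-3}$ over $\{a,b\}$. As a single word it is trivially subword-free, and $\kappa(L)=|ab^{n-3}|+2=n$. Its star $L^*=(ab^{n-3})^*$ is accepted by the automaton that reads the period $ab^{n-3}$ cyclically: $n-2$ states recording the position modulo $n-2$ (with the unique initial state accepting), together with one empty state entered on any deviation from the pattern. A routine check shows these $n-1$ states are reachable and pairwise distinguishable; for $n=3$ this degenerates to $L=a$ with $L^*=a^*$, whose two quotients over $\{a,b\}$ are $a^*$ and $\emptyset$. Hence $\kappa(L^*)=n-1$ and the bound is attained.
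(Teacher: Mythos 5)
Your proof is correct and follows essentially the same route as the paper: you expand the quotients of $L^*$ via the same star formula, show the union collapses to a single quotient of $L$ using prefix-freeness (Proposition~\ref{prop:pf}), and exclude the extra quotient $LL^*$ via unique reachability of $L$ (Proposition~\ref{prop:sf}), arriving at the same count $1+1+(n-3)=n-1$. Your witness $ab^{n-3}$ differs from the paper's $a^{n-2}$ but is likewise a single binary word of length $n-2$ and verifies just as easily.
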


\begin{proof}
 Assume that $L$ is bifix-free.
 Then it is prefix-free, has only one accepting quotient, namely $\eps$,
 and has the empty quotient, by Proposition~\ref{prop:pf}.
 Moreover, since $L$ is suffix-free,
 the quotient $L$ is uniquely reachable by $\eps$, by Proposition~\ref{prop:sf}.

 Let $L_w$ be a  non-empty quotient of $L$ by a non-empty word $w$.
 Let us show that $(L^*)^\eps_u=\emptyset$
 for every proper non-empty prefix $u$ of $w$.
 Assume for contradiction 
 that $\eps \in (L^*)_u$, where
 $w=uv$ for some non-empty words \mbox{$u$ and $v$.}
 Then $u \in L^*$,
 and  so there exist words $x$ in $L$ and  $y$ in $L^*$ such that $u=xy$.
 This gives $L_w=L_{xyv}=\eps_{yv}=\emptyset$
 because $x\in L$ implies  $L_x=\eps$. 
 This is a contradiction,
 and so we must have $(L^*)_u^\eps=\emptyset$.
 Hence, if $L_w$ is  non-empty,
 then $(L^*)_w=L_wL^*$, by the equation above. 
 Now if $L_w$ is accepting, then $L_w=\eps$,
 and so $(L^*)_w=L^*=(L^*)_\eps$.
 There are $n-2$ choices for 
 rejecting and non-empty quotients $L_w$.
 But, for a non-empty word $w$, we have $L_w\neq L$
 since $L$ is uniquely reachable by $\eps$.
 This reduces the number of choices to $n-3$ (since we have $n\ge3$).
 If $L_w=\emp$, then by the  observation above,
 $(L^*)_w = (L^*)^\eps_u L_v L^*$,
 where $w=uv$ and $v$ is the shortest word such that $L_v\neq \emp$.
 Such a quotient is either empty or has already been counted.
 In total, there are at most $n-1$ quotients of $L^*$.

 The  subword-free language $a^{n-2}$ over the alphabet $\{a,b\}$
 meets the bound since the language $(a^{n-2})^*$ has $n-2$ quotients
 of the form $a^{n-2-i}(a^{n-2})^*$ for $i=1,2,\dots,n-2$,
 and it has the empty quotient, for a total of $n-1$.
\end{proof}

\section{Reversal}
\label{***rev}

The last operation we consider is reversal. 
In \cite{HaSa09,HSW09} it was shown 
that the complexity of reversal is $2^{n-2}+1$ for suffix-free or prefix-free languages.
We show that this bound can be reduced  for bifix-free languages.
We use the standard method of reversing 
the quotient dfa $\cD$ of $L$ to obtain an nfa $\cN$ for $L^R$, 
and then we use  subset construction to find the dfa $\cD^R$ for $L^R$.

\begin{theorem}[\bf Reversal: Bifix- and Factor-Free Languages]\label{thm:rev1}
 If $L$ is a bifix-free language  with $\kappa(L)=n$,
 where $n\ge3$,
 then  $\kappa(L^R)\leq 2^{n-3}+2$.
 Moreover, there exist ternary factor-free languages that meet this bound.
\end{theorem}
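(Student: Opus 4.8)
The plan is to use the standard reversal recipe: reverse the minimal (quotient) dfa $\cD$ of $L$ to obtain an nfa $\cN$ for $L^R$, and then bound the number of reachable subsets produced by the subset construction. I take the states of $\cD$ to be the quotients of $L$. Since $L$ is bifix-free, Propositions~\ref{prop:pf} and~\ref{prop:sf} single out three quotients: the initial quotient $q_0=L=L_\eps$, which is uniquely reachable and hence \emph{non-returning} in $\cD$; the unique accepting quotient $\eps$, all of whose transitions go to $\emp$; and the empty quotient $\emp$. The remaining $n-3$ quotients I call \emph{intermediate}. In $\cN$ the set of initial states is $\{\eps\}$ and the single accepting state is $q_0$. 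Reading a word $w$ from $\{\eps\}$ in $\cN$ lands exactly on those quotients from which $w^R$ is accepted, so the subset reached by $w$ is $R_w=\{L_x : w^R\in L_x\}=\{L_x : xw^R\in L\}$, with $q_0$ corresponding to $x=\eps$.

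First I would dispose of the two trivial quotients. Since $\emp$ contains no word, $\emp\in R_w$ is impossible. Since the accepting quotient is the language $\{\eps\}$, we have $\eps\in R_w$ only when $w^R=\eps$, i.e.\ $w=\eps$; thus $\eps$ appears only in the start subset $R_\eps=\{\eps\}$. This already gives the prefix-free bound, as every other reachable subset lies in $\{q_0\}\cup\{\text{intermediate quotients}\}$. The extra saving is the heart of the argument and uses suffix-freeness: suppose $q_0=L_\eps\in R_w$, so $w^R\in L$, and suppose some intermediate $L_x$ (with $x\neq\eps$) also lies in $R_w$, so $xw^R\in L$. Then $w^R$ is a suffix of $xw^R$, both are in $L$, and suffix-freeness forces $w^R=xw^R$, i.e.\ $x=\eps$, a contradiction. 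Hence whenever $q_0\in R_w$ we have $R_w=\{q_0\}$. Reachable subsets therefore fall into three kinds — the start $\{\eps\}$, the single accepting subset $\{q_0\}$, and subsets of the $n-3$ intermediate quotients — so there are at most $1+1+2^{n-3}=2^{n-3}+2$ of them, bounding $\kappa(L^R)$ as claimed.

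For tightness I would exhibit, for each $n$, a ternary \emph{factor-free} language whose reversal attains the bound, with three verifications. (i) \emph{Factor-freeness}: as with the earlier witnesses of this paper, this should follow from a letter-counting invariant forcing every word of $L$ to be minimal under the factor order; factor-freeness automatically supplies the bifix-freeness the upper bound needs. (ii) \emph{Reachability}: I would design the three letters so that, under reversal, they drive the $n-3$ intermediate quotients like a subset-generating set of transformations, exhibiting for every $T$ of intermediate quotients an explicit $w$ with $w^R\notin L$ and $R_w=T$, and checking that $\{\eps\}$ and $\{q_0\}$ are reached as well. (iii) \emph{Distinguishability}: $\{q_0\}$ is the only accepting subset and dies under every letter (because $q_0$ is non-returning), so it is separated by $\eps$; two distinct intermediate subsets are separated by a word that sends one, but not the other, into a subset containing $q_0$; and the start $\{\eps\}$ is separated through the unique way it reaches acceptance.

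The main obstacle is the tightness construction itself. Factor-freeness is a severe restriction — it forces the non-returning initial quotient, the unique accepting quotient $\eps$ whose transitions all fall into $\emp$, and it constrains how the three letters may act — so the real work is to wire these $n-3$ states, using only three letters, so that the reverse automaton genuinely realizes all $2^{n-3}$ intermediate subsets while every accepted word stays factor-minimal. Verifying the reachability of all $2^{n-3}$ intermediate subsets for the chosen witness is where the bulk of the computation lies; by comparison, the upper bound and the distinguishability step are routine.
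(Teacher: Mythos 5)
Your upper bound argument is correct. It is a mild variant of the paper's: where the paper first counts $2^{n-2}+1$ reachable subsets in the determinized reverse and then uses prefix-freeness of $L^R$ to merge all $2^{n-3}$ accepting subsets into one, you use suffix-freeness of $L$ directly to show that the only reachable subset containing $q_0$ is $\{q_0\}$ itself, so the saving happens already at the reachability stage. The identification $R_w=\{L_x : xw^R\in L\}$, the exclusion of $\emp$ and of $\eps$ (outside the start subset), and the resulting count $1+1+2^{n-3}$ are all sound.

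The tightness half, however, is a genuine gap: the theorem asserts the \emph{existence} of ternary factor-free witnesses, and you have only described what a witness would have to do, while explicitly conceding that constructing it and verifying reachability of all $2^{n-3}$ intermediate subsets is ``where the bulk of the computation lies.'' That is precisely the content that is missing. The idea you need is a reduction rather than a from-scratch wiring of the intermediate states: take a binary regular language $K$ with $\kappa(K)=n-3$ and $\kappa(K^R)=2^{n-3}$ (a known worst-case witness for reversal of general regular languages), and set $L=cKc$ for a fresh third letter $c$. Proposition~\ref{prop:construct} makes $L$ factor-free for free, one checks $\kappa(L)=n$ (the $n-3$ states of $K$, plus the initial, $\eps$, and empty quotients), and the $2^{n-3}$ pairwise distinct quotients of $K^R$ lift to $2^{n-3}$ distinct quotients $(L^R)_{cw}$ of $L^R=cK^Rc$, which together with $(L^R)_\eps$ and $(L^R)_{cuc}=\eps$ give $2^{n-3}+2$. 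This sidesteps entirely the problem you identified as the main obstacle — designing three letters that both generate all subsets under reversal and preserve factor-minimality — because the hard subset-generation work is inherited from $K$ and factor-freeness is imposed by the padding letter alone. (The small cases $n=3,4$ also need separate witnesses, e.g.\ $a$ and $aa$.) Without some such explicit construction, the second sentence of the theorem remains unproved.
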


\begin{proof}
 If $L$ is bifix-free, then so is $L^R$.
 Since $L$ is prefix-free,
 it has exactly one accepting quotient, $\eps$, and also has the empty quotient.
 
 Consider the quotient automaton $\cD$ for $L$,
 and remove the empty quotient and all the transitions to the empty quotient.
 Reverse this incomplete dfa
 to get an $(n-1)$-state nfa $\cN$ for $L^R$.
 Apply the subset construction to $\cN$ to get
 a dfa $\cD^R$ for $L^R$.
 The initial state of dfa $\cD^R$
 is the singleton set $\{f\}$,
 where $f$ is the $\eps$ quotient in quotient automaton $\cD$.
 No other subset containing state $f$ is reachable in $\cD^R$
 since no transition goes to state $f$ in nfa $\cN$.
 This gives at most $2^{n-2}+1$ reachable states.
 However, language $L^R$ is prefix-free,
 and so all the accepting states of $\cD^R$
 accept only the empty word,
 and can be merged into one state.
 Hence $\kappa(L^R)\leq 2^{n-3}+2$.
 
 If $n=3$ or $n=4$, then factor-free languages $a$ and $aa$, respectively,
 meet the bounds.
 
If $n\ge5$, then  consider the language $L=cKc$,
 where $K$  is  a regular language over the alphabet  $\{a,b\}$
 with $\kappa(K)=n-3$
 meeting the upper bound $2^{n-3}$ for  reversal \cite{Seb10}.
 The quotient automaton of  $L$ without the empty state 
 is shown in  Figure~\ref{fig:seb3}.
 \begin{figure}[t]
 \begin{center}
 \setlength{\unitlength}{0.00052493in}
\begingroup\makeatletter\ifx\SetFigFont\undefined%
\gdef\SetFigFont#1#2#3#4#5{%
  \reset@font\fontsize{#1}{#2pt}%
  \fontfamily{#3}\fontseries{#4}\fontshape{#5}%
  \selectfont}%
\fi\endgroup%
{\renewcommand{\dashlinestretch}{30}
\begin{picture}(9352,1706)(0,-10)
\put(6402,245){\makebox(0,0)[lb]{\smash{{\SetFigFont{8}{9.6}{\familydefault}{\mddefault}{\updefault}$a$}}}}
\put(7879.000,1263.000){\arc{332.800}{2.2155}{7.2093}}
\path(8016.379,1226.062)(7979.000,1130.000)(8057.187,1197.170)
\put(5987.000,1271.000){\arc{332.800}{2.2155}{7.2093}}
\path(6124.379,1234.062)(6087.000,1138.000)(6165.187,1205.170)
\put(2755,830){\ellipse{630}{630}}
\put(8984,833){\ellipse{720}{720}}
\put(1685,828){\ellipse{630}{630}}
\put(8985,831){\ellipse{630}{630}}
\put(3848,820){\ellipse{630}{630}}
\put(4891,820){\ellipse{630}{630}}
\put(5986,828){\ellipse{630}{630}}
\put(7891,821){\ellipse{630}{630}}
\put(597,829){\ellipse{630}{630}}
\path(1992,830)(2442,830)
\path(2322.000,800.000)(2442.000,830.000)(2322.000,860.000)
\path(3072,830)(3522,830)
\path(3402.000,800.000)(3522.000,830.000)(3402.000,860.000)
\path(8217,823)(8615,823)
\path(8495.000,793.000)(8615.000,823.000)(8495.000,853.000)
\path(5217,830)(5667,830)
\path(5547.000,800.000)(5667.000,830.000)(5547.000,860.000)
\path(6289,830)(6739,830)
\path(6619.000,800.000)(6739.000,830.000)(6619.000,860.000)
\path(7137,830)(7587,830)
\path(7467.000,800.000)(7587.000,830.000)(7467.000,860.000)
\path(919,830)(1369,830)
\path(1249.000,800.000)(1369.000,830.000)(1249.000,860.000)
\path(12,838)(282,838)
\path(162.000,808.000)(282.000,838.000)(162.000,868.000)
\path(2562,1093)(2561,1094)(2559,1098)
	(2555,1103)(2549,1111)(2540,1122)
	(2530,1136)(2518,1151)(2504,1167)
	(2489,1184)(2473,1200)(2455,1217)
	(2437,1233)(2417,1247)(2395,1261)
	(2371,1272)(2346,1282)(2317,1290)
	(2287,1294)(2254,1295)(2221,1292)
	(2189,1285)(2159,1276)(2131,1264)
	(2105,1250)(2080,1235)(2057,1219)
	(2035,1202)(2014,1184)(1994,1166)
	(1975,1148)(1958,1131)(1943,1115)
	(1930,1101)(1919,1090)(1902,1070)
\path(1956.860,1180.862)(1902.000,1070.000)(2002.576,1142.003)
\path(4099,1017)(4101,1018)(4105,1021)
	(4111,1026)(4121,1032)(4133,1040)
	(4148,1049)(4164,1059)(4183,1069)
	(4202,1079)(4224,1088)(4247,1096)
	(4273,1103)(4302,1109)(4334,1113)
	(4369,1115)(4404,1114)(4437,1111)
	(4467,1106)(4495,1099)(4520,1092)
	(4544,1083)(4566,1074)(4587,1065)
	(4606,1056)(4623,1047)(4637,1039)(4662,1025)
\path(4542.641,1057.457)(4662.000,1025.000)(4571.957,1109.808)
\path(4647,613)(4645,612)(4641,609)
	(4635,604)(4625,598)(4613,590)
	(4598,581)(4582,571)(4563,561)
	(4544,551)(4522,542)(4499,534)
	(4473,527)(4444,521)(4412,517)
	(4377,515)(4342,516)(4309,519)
	(4279,524)(4251,531)(4226,538)
	(4202,547)(4180,556)(4159,565)
	(4140,574)(4123,583)(4109,591)(4084,605)
\path(4203.359,572.543)(4084.000,605.000)(4174.043,520.192)
\path(3580,629)(3579,628)(3577,626)
	(3573,622)(3566,616)(3557,608)
	(3545,597)(3531,585)(3513,569)
	(3493,553)(3471,534)(3446,515)
	(3420,494)(3392,474)(3362,453)
	(3331,432)(3298,412)(3264,392)
	(3228,373)(3191,355)(3151,338)
	(3108,323)(3063,309)(3015,296)
	(2963,285)(2909,277)(2852,271)
	(2793,269)(2738,270)(2683,273)
	(2630,279)(2579,287)(2530,297)
	(2484,309)(2440,322)(2398,336)
	(2357,351)(2318,367)(2281,384)
	(2245,402)(2210,420)(2176,438)
	(2143,457)(2112,476)(2082,494)
	(2054,512)(2028,529)(2005,544)
	(1984,558)(1966,571)(1951,581)
	(1938,590)(1929,597)(1915,607)
\path(2030.085,561.663)(1915.000,607.000)(1995.211,512.839)
\path(7712,546)(7711,545)(7709,544)
	(7706,541)(7701,537)(7693,532)
	(7683,524)(7671,515)(7655,504)
	(7637,491)(7616,477)(7593,461)
	(7568,444)(7540,426)(7510,407)
	(7478,388)(7445,369)(7410,349)
	(7373,330)(7335,311)(7295,292)
	(7253,274)(7210,256)(7164,239)
	(7116,224)(7065,209)(7011,195)
	(6955,182)(6894,170)(6830,161)
	(6763,152)(6692,146)(6618,142)
	(6542,141)(6473,142)(6405,145)
	(6337,150)(6271,157)(6207,165)
	(6145,174)(6085,185)(6027,197)
	(5971,209)(5916,223)(5864,237)
	(5812,252)(5762,267)(5714,283)
	(5666,300)(5620,317)(5574,334)
	(5530,351)(5487,369)(5446,386)
	(5406,403)(5367,420)(5331,437)
	(5296,452)(5264,467)(5235,481)
	(5209,494)(5185,505)(5165,515)
	(5148,523)(5134,530)(5123,536)
	(5114,540)(5102,546)
\path(5222.748,519.167)(5102.000,546.000)(5195.915,465.502)
\put(1624,763){\makebox(0,0)[lb]{\smash{{\SetFigFont{8}{9.6}{\rmdefault}{\mddefault}{\updefault}$1$}}}}
\put(2711,763){\makebox(0,0)[lb]{\smash{{\SetFigFont{8}{9.6}{\rmdefault}{\mddefault}{\updefault}$2$}}}}
\put(1617,1520){\makebox(0,0)[lb]{\smash{{\SetFigFont{8}{9.6}{\familydefault}{\mddefault}{\updefault}$b$}}}}
\put(7256,935){\makebox(0,0)[lb]{\smash{{\SetFigFont{8}{9.6}{\familydefault}{\mddefault}{\updefault}$a$}}}}
\put(3199,935){\makebox(0,0)[lb]{\smash{{\SetFigFont{8}{9.6}{\familydefault}{\mddefault}{\updefault}$a$}}}}
\put(2127,935){\makebox(0,0)[lb]{\smash{{\SetFigFont{8}{9.6}{\familydefault}{\mddefault}{\updefault}$a$}}}}
\put(2217,1370){\makebox(0,0)[lb]{\smash{{\SetFigFont{8}{9.6}{\familydefault}{\mddefault}{\updefault}$b$}}}}
\put(3791,770){\makebox(0,0)[lb]{\smash{{\SetFigFont{8}{9.6}{\rmdefault}{\mddefault}{\updefault}$3$}}}}
\put(4826,770){\makebox(0,0)[lb]{\smash{{\SetFigFont{8}{9.6}{\rmdefault}{\mddefault}{\updefault}$4$}}}}
\put(4294,1235){\makebox(0,0)[lb]{\smash{{\SetFigFont{8}{9.6}{\familydefault}{\mddefault}{\updefault}$b$}}}}
\put(4317,275){\makebox(0,0)[lb]{\smash{{\SetFigFont{8}{9.6}{\familydefault}{\mddefault}{\updefault}$b$}}}}
\put(2795,73){\makebox(0,0)[lb]{\smash{{\SetFigFont{8}{9.6}{\familydefault}{\mddefault}{\updefault}$a$}}}}
\put(5337,935){\makebox(0,0)[lb]{\smash{{\SetFigFont{8}{9.6}{\familydefault}{\mddefault}{\updefault}$a$}}}}
\put(5928,778){\makebox(0,0)[lb]{\smash{{\SetFigFont{8}{9.6}{\rmdefault}{\mddefault}{\updefault}$5$}}}}
\put(6807,792){\makebox(0,0)[lb]{\smash{{\SetFigFont{8}{9.6}{\familydefault}{\mddefault}{\updefault}$\cdots$}}}}
\put(6461,928){\makebox(0,0)[lb]{\smash{{\SetFigFont{8}{9.6}{\familydefault}{\mddefault}{\updefault}$a$}}}}
\put(8313,935){\makebox(0,0)[lb]{\smash{{\SetFigFont{8}{9.6}{\familydefault}{\mddefault}{\updefault}$c$}}}}
\put(7805,1520){\makebox(0,0)[lb]{\smash{{\SetFigFont{8}{9.6}{\familydefault}{\mddefault}{\updefault}$b$}}}}
\put(5937,1520){\makebox(0,0)[lb]{\smash{{\SetFigFont{8}{9.6}{\familydefault}{\mddefault}{\updefault}$b$}}}}
\put(8725,767){\makebox(0,0)[lb]{\smash{{\SetFigFont{7}{8.4}{\familydefault}{\mddefault}{\updefault}$n-2$}}}}
\put(7637,767){\makebox(0,0)[lb]{\smash{{\SetFigFont{7}{8.4}{\familydefault}{\mddefault}{\updefault}$n-3$}}}}
\put(1039,928){\makebox(0,0)[lb]{\smash{{\SetFigFont{8}{9.6}{\familydefault}{\mddefault}{\updefault}$c$}}}}
\put(537,763){\makebox(0,0)[lb]{\smash{{\SetFigFont{8}{9.6}{\rmdefault}{\mddefault}{\updefault}$0$}}}}
\put(1694.000,1248.000){\arc{332.800}{2.2155}{7.2093}}
\path(1831.379,1211.062)(1794.000,1115.000)(1872.187,1182.170)
\end{picture}
}
 \end{center}
 \caption{The ternary factor-free language meeting the $2^{n-3}+2$ bound for reversal.}  
 \label{fig:seb3}
 \end{figure}
 
 By Proposition~\ref{prop:construct},
 language $L$ is factor-free, and $\kappa(L)=n$.
 Since $\kappa(K^R)=2^{n-3}$,
 there exists a set $S$ of $2^{n-3}$ words over $\{a,b\}$
 that define distinct quotients of language $K^R$.
 Then the quotients of  $cK^Rc$
 by $2^{n-3}+2$ words $\eps$, $cw$ with $w\in S$,
 and $cuc$ for some word $u$ in $K^R$,
 are distinct as well.
 This gives $\kappa(L^R)=2^{n-3}+2$.
\end{proof}

\begin{theorem}[\bf Reversal: Subword-Free Languages]\label{thm:rev_sbw}
 If $L$ is a subword-free language  over an alphabet $\Sigma$
 with $\kappa(L)=n$,
 where $n\ge4$,
 then $\kappa(L^R) \le 2^{n-3}+2$.
 The bound is tight if $|\Sigma|\ge2^{n-3}-1$,
 but cannot be met for smaller alphabets.
 The bound cannot be met if  $L$
 contains a word of length at least 3.
\end{theorem}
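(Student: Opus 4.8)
The plan is to obtain the upper bound for free from the bifix-free case and then treat tightness, the alphabet lower bound, and the length restriction uniformly through the structure of the reverse subset automaton. Since subword-free languages are factor-free and hence bifix-free, Theorem~\ref{thm:rev1} gives $\kappa(L^R)\le 2^{n-3}+2$ at once. For the rest I would work with the nfa $\cN$ obtained, as in the proof of Theorem~\ref{thm:rev1}, by deleting the empty quotient from the quotient dfa $\cD$ of $L$ and reversing. Order the $n$ quotients as in Proposition~\ref{prop:finite}, $L=L_1\prec L_2\prec\cdots\prec L_{n-2}\prec L_{n-1}=\eps\prec L_n=\emptyset$, call $L_2,\dots,L_{n-2}$ the \emph{intermediate} quotients, and set $f=L_{n-1}=\eps$. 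The structural observation I would establish is this: in $\cN$ every transition strictly lowers the order (forward transitions strictly raise it, by Proposition~\ref{prop:finite}), $f$ has no incoming transition and so occurs only in the initial subset $\{f\}$, and the maximal intermediate $L_{n-2}$ has $f$ as its \emph{only} predecessor in $\cN$ (forward, its only successors are $\eps$ and $\emptyset$). Consequently, in the subset automaton $\cD^R$ any reachable subset containing $L_{n-2}$ is produced from $\{f\}$ by a single letter, and the upper-bound bookkeeping shows the only reachable states are $\{f\}$, the collapsed accepting state $\{L_1\}$, and the $2^{n-3}$ subsets of intermediates (including $\emptyset$). Thus meeting the bound is equivalent to all $2^{n-3}$ of these subsets being reachable and pairwise distinguishable.

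For tightness when $|\Sigma|=2^{n-3}-1$ (using $n\ge4$, so $t:=n-3\ge1$) I would index the alphabet by the nonempty subsets $T\subseteq\{1,\dots,t\}$, writing $\sigma_T$, and set $L=\{\sigma_{\{i\}}\sigma_T : i\in T\}$. All words have length $2$, so distinct words are subword-incomparable and $L$ is subword-free; its quotients are $L$, the $t$ distinct intermediates $S_i=L_{\sigma_{\{i\}}}=\{\sigma_T:T\ni i\}$, together with $\eps$ and $\emptyset$, giving $\kappa(L)=n$. In $\cN$ one has $f\stackrel{\sigma_T}{\rightarrow}S_i$ exactly when $i\in T$, so $\delta_{\{f\}}(\sigma_T)=\{S_i:i\in T\}$ runs over all $2^{n-3}-1$ nonempty subsets of intermediates, while $\emptyset$ is reached after two letters. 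Distinguishability is routine: a letter $\sigma_{\{j\}}$ with $j\in T\setminus T'$ sends $\{S_i:i\in T\}$ to the accepting $\{L_1\}$ but $\{S_i:i\in T'\}$ to $\emptyset$, and $\{f\}$ is separated from every intermediate subset because it accepts a word of length $2$ (e.g.\ $\sigma_{\{1\}}\sigma_{\{1\}}$) whereas no intermediate subset accepts any word of length $\ge2$. For larger $\Sigma$ I would simply pad with letters sending everything to the empty quotient, which changes neither $\kappa(L)$ nor $\kappa(L^R)$.

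Both impossibility claims then follow from the same structural observation under the assumption that the bound is met, so that every subset of intermediates is reachable. In particular the full subset $U=\{L_2,\dots,L_{n-2}\}$ is reachable, and since it contains $L_{n-2}$ it must equal $\delta_f(x_0)$ for a single letter $x_0$; this forces every intermediate quotient to contain the length-$1$ word $x_0$ (that is, $L_i\stackrel{x_0}{\rightarrow}\eps$ forward) and forces $x_0\notin L$. For the alphabet bound I would note that, once all words have length $\le2$, the intermediates have no transitions among themselves in $\cN$, so each nonempty subset of intermediates arises only as $\delta_f(x)$ for one letter $x$; the $2^{n-3}-1$ nonempty subsets therefore demand $2^{n-3}-1$ distinct letters, giving $|\Sigma|\ge 2^{n-3}-1$ and ruling out smaller alphabets. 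For the length restriction, if $L$ contained a word $b_1b_2\cdots b_k$ with $k\ge3$, then $L_{b_1}$ and $L_{b_1b_2}$ are distinct intermediate quotients (by Propositions~\ref{prop:sf},~\ref{prop:pf} and~\ref{prop:finite}), both containing $x_0$; hence $b_1x_0,\ b_1b_2x_0\in L$, yet $b_1x_0$ is a proper subword of $b_1b_2x_0$, contradicting subword-freeness. So the bound cannot be met when $L$ has a word of length at least $3$.

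The main obstacle is the structural observation itself, specifically proving that the top intermediate quotient is reachable in $\cN$ only straight from $f$ and that meeting the bound forces the all-intermediates subset to be created by a single letter; everything downstream is then either a pigeonhole count or the two-line subword-freeness clash above. The only other point requiring care is checking that the upper-bound argument of Theorem~\ref{thm:rev1} genuinely pins the reachable states of $\cD^R$ down to $\{f\}$, $\{L_1\}$ and the $2^{n-3}$ subsets of intermediates, so that ``bound met'' is exactly equivalent to ``all these subsets reachable and pairwise distinguishable,'' which is what makes the impossibility arguments bite.
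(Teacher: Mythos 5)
Your proposal is correct and follows essentially the same route as the paper: the upper bound is inherited from the bifix-free case, the witness is the same family of two-letter-word languages with letters indexed by subsets of $\{1,\dots,n-3\}$, and both impossibility claims come from the observation that meeting the bound forces the full set of intermediate states to be reachable in the reversed subset automaton, which yields a word $v$ (in your version a single letter $x_0$) accepted from every intermediate quotient and hence the subword-freeness clash $b_1v,\,b_1b_2v\in L$. The only cosmetic difference is that you prove the slightly stronger structural fact that this set must be reached by a single letter, whereas the paper works with an arbitrary word $v$; the downstream arguments are identical.
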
 

\begin{proof}
 Suppose $L $ is a subword-free language such that $\kappa(L^R) = 2^{n-3}+2$.
 Let $\cD=(Q,\Sig,\delta,s,f)$ be the quotient dfa
 of  $L$ with  $Q=\{s,q_1,\dots,q_{n-3},f,e\}$ as the state set,
 where $e$ and $f$ correspond to the quotients $\emp$ and $\eps$.
 Construct a dfa $\cD^R$ for $L^R$
 as in the proof of Theorem \ref{thm:rev1}.
 If $\kappa(L^R) = 2^{n-3}+2$,
 then the state $\{q_1,q_2,\dots,q_{n-3}\}$ must be reachable.
 Therefore there must exist a non-empty word $v$ such that,
 for all $q_i$, we have ${\delta}(q_i, v)=f$.
 Now suppose there exists a word $w$ in $L$ such that $|w|>2$.
 Let $w=abx$ where $a,b \in \Sig$ and $x \in \Sig^+$.
 Also suppose $\delta(s, a)=q_i$ and $\delta(q_i, b)=q_j$.
 Then we have $av, abv\in L$, showing that $L$ is not subword-free,
 which is a contradiction.
 Hence, if any word in $L$ has length at least 3,
 then $\kappa(L^R) < 2^{n-3}+2$.
 Now note that, if all the words in $L$ have length at most 2,
 the only possible quotients of $L^R$
 are $L^R$, $(L^R)_a$ for all $a \in \Sig$, $\eps$, and $\emp$.
 Therefore $\kappa(L^R) \leq |\Sig| + 3$,
 and the second claim follows.

 Now consider tightness.
 If $n=3$, then the bound is met
 by the unary subword-free language $a$.
 Let $n\ge 4$  and $\ell=2^{n-3}-1$.
 Also let $\Sig=\{a_1,a_2,\dots,a_\ell\}$, and let
 $S_1, S_2, \ldots, S_\ell$ 
 be all the non-empty subsets of $\{1,2,\ldots,n-3\}$.
 Now let
 $$
    L^R= a_1 (\bigcup_{j\in S_1} a_j) \cup 
         a_2 (\bigcup_{j\in S_2} a_j) \cup \cdots \cup
         a_\ell (\bigcup_{j\in S_\ell} a_j).
 $$
 Since $L^R$ only contains two-letter words,
 languages $L^R$ and $L$ are subword-free.
 The quotients of $L^R$ are
 $L^R$,
 $(L^R)_{a_i}=\bigcup_{j\in S_i} a_j$ for  $i=1,2,\ldots,\ell$,
 $ \eps$, and 
 $\emptyset$.
 Therefore $\kappa(L^R)=l+3=2^{n-3}+2$. 
 But for $L$, the only possible and distinct quotients are
 $L$,
 $L_{a_i}$ for $i=1,2,\ldots,n-3$,
 $\eps$, and $\emptyset$.
 Thus $\kappa(L)=n$.
\end{proof}

\section{Conclusions}
\label{***conc}

Our  results are summarized in Tables~\ref{tab:bool} and~\ref{tab:gen}, 
where ``B-, F-free'' stands for  bifix-free and factor-free, 
and ``S-free'' for subword-free.
The bounds for operations on prefix-free languages are from \cite{HaSa09,JiKr10},
for operations  on suffix-free languages 
from \cite{Cmo11,HSW09,JiOl09},
and those for regular languages, from~\cite{Lei81,Mas70,YZS94}.
For languages over a unary alphabet $\Sig=\{a\}$, 
the concepts prefix-, suffix-, factor-, and subword-free coincide, 
and $L$ is free with  $\kappa(L)=n$ if and only if $L=\{a^{n-2}\}$.

In the case of subword-free languages 
the size of the alphabet cannot be decreased.
In the other cases,  whenever the size of the alphabet is greater than~2,
we do not know whether or not the bounds are tight for smaller alphabets.

The fact that our bounds usually  apply only when $m,n\ge3$ 
is not a limitation, 
since bifix-free languages with smaller quotient complexities are simple. 
For $n=1$, we have only $\emp$,
for $n=2$, only $\eps$, and for $n=3$, a subset of $\Sig$. 
The~complexities of operations on such languages can be computed directly.

\setlength{\extrarowheight}{2pt}
\begin{table}[ht]
\begin{center}
\begin{scriptsize}
$
\begin{array}{| l | cc | cc | cc | }  
\hline
    &  K\cup L,  K \oplus L  &|\Sig|   & K\cap L      &|\Sig|   & K\setminus L &|\Sig|\\
\hline\hline
\txt{free unary} &\max(m,n)  &         & m=n          &         & m            &    \\
\hline\hline
\txt{prefix}     &mn-2       & 2       & mn-2(m+n-3)  &  2      & mn-(m+2n-4)  & 2  \\
\hline\hline 
\txt{suffix}     &mn-(m+n-2) & 2       & mn-2(m+n-3)  & 2       & mn-(m+2n-4)  & 2  \\
\hline\hline
\txt{B-, F-free} &mn-(m+n)   & 3       & mn-3(m+n-4)  & 2       & mn-(2m+3n-9) & 2  \\
\hline\hline
\txt{S-free}     &mn-(m+n)   & s_1     & mn-3(m+n-4)  & s_2     & mn-(2m+3n-9) & s_3\\
\hline\hline
\txt{regular }   &mn         & 2       & mn           & 2       & mn           &  2 \\
\hline
\end{array}
$
\end{scriptsize}
\end{center}
\caption{Complexities of boolean operations on free languages; 
           $s_1=m+n-3$, $s_2=m+n-7,s_3=m+n-6$.}
\label{tab:bool}
\end{table}

\begin{table}[ht]
\begin{center}
\begin{scriptsize}
$
\begin{array}{| l |cc|cc|cc|}
\hline
                  &  KL         &|\Sig|& L^*          &|\Sig|&  L^R       &|\Sig|       \\
\hline\hline
\txt{free unary}  & m+n-2          &   & n               &   & 2^{n-2}+1  &             \\
\hline\hline
\txt{prefix-free} & m+n-2          & 1 & n               & 2 & 2^{n-2}+1  & 3           \\
\hline\hline 
\txt{suffix-free} & (m-1)2^{n-1}+1 & 3 & 2^{n-2}+1       & 2 & 2^{n-2}+1  & 3           \\
\hline\hline 
\txt{B-, F-free}  & m+n-2          & 1 & n-1             & 2 &\ 2^{n-3}+2 & 3           \\
\hline\hline
\txt{S-free}      & m+n-2          & 1 & n-1             & 2 & 2^{n-3}+2  &\ 2^{n-3}-1 \ \\
\hline\hline
\txt{regular}     & (2m-1)2^{n-1}  & 2 &\ 2^{n-1}+2^{n-2}& 2 & 2^n        &  2           \\
\hline
\end{array}
$
\end{scriptsize}
\end{center}
\caption{Complexities of product, star, and reversal of free languages.}
\label{tab:gen}
\end{table}

\vskip-60pt

\providecommand{\noopsort}[1]{}

\end{document}